\documentclass[11pt]{article}


\usepackage[letterpaper,hmargin=1in,vmargin=1in]{geometry}


\usepackage{authblk}

\usepackage{amsmath, amsfonts, amssymb, mathtools,amscd}
\usepackage{amsthm}
\usepackage[noend]{algorithm,algorithmic}

\usepackage{arydshln} 
\usepackage{url}
\usepackage{ifthen}
\usepackage{bm}
\usepackage{multirow}
\usepackage{xcolor,colortbl} 
\usepackage{hhline}
\usepackage{threeparttable}
\usepackage{comment}
\usepackage{paralist,verbatim}
\usepackage{cases}
\usepackage{booktabs}
\usepackage{braket}
\usepackage{cancel} 
\usepackage{ascmac} 
\usepackage{framed}
\usepackage{graphics}
\definecolor{darkblue}{rgb}{0,0,0.6}
\definecolor{darkgreen}{rgb}{0,0.5,0}
\definecolor{maroon}{rgb}{0.5,0.1,0.1}
\definecolor{dpurple}{rgb}{0.2,0,0.65}

\usepackage[capitalise,noabbrev]{cleveref}
\usepackage[absolute]{textpos}
\usepackage{float,multicol,fancybox}
\usepackage{booktabs}
\usepackage[noend]{algorithm,algorithmic}
\usepackage{citesort}
\usepackage{subfigure}
\usepackage[margin=10pt,font=small,labelfont=bf]{caption}

\if0
\newtheoremstyle{thicktheorem}%
{\topsep}
{\topsep}
{\itshape}{}%
{\bfseries}%
{.}
{ }%
{\thmname{#1}\thmnumber{ #2}%
		\thmnote{ (#3)}%
}

\newtheoremstyle{remark}
{\topsep}
{\topsep}
	{}
	{}
	{}
	{.}
	{ }
	{\textbf{\thmname{#1}}\thmnumber{ #2}
			\thmnote{ (#3)}%
	}

	\theoremstyle{thicktheorem}

	\theoremstyle{remark}

\fi

\def\authnote{1}
\ifnum\authnote=1
\newcommand{\Comment}[1]{\textcolor{red}{#1}} 
\newcommand{\TODO}[1]{\medskip \noindent \textcolor{blue}{\textbf{$\langle \! \langle$ To Do: #1 $\rangle \! \rangle$}} \medskip}

\newcommand{\yohei}[1]{\textcolor{magenta}{ $\langle \! \langle$ yohei: ``#1" $\rangle \! \rangle$}}
\else
\newcommand{\Comment}[1]{} 
\newcommand{\TODO}[1]{}
\newcommand{\yohei}[1]{}
\fi

\newcommand{\para}[1]{\noindent\textbf{#1}. }
\newcommand{\mpara}[1]{\medskip\noindent\textbf{#1}. }
\newcommand{\spara}[1]{\smallskip\noindent\textbf{#1}. }

\definecolor{mygray}{gray}{0.3}

\newcommand{\ctext}[1]{\raise0.2ex\hbox{\textcircled{\scriptsize{#1}}}}

\allowdisplaybreaks 


\newcommand{\Ad}{{\sf A}}

\newcommand{\AB}{\allowbreak}
\newcommand{\ceq}{\coloneqq}


\DeclareMathAlphabet{\mathpzc}{OT1}{pzc}{m}{it}

\renewcommand{\set}[1]{\{ #1 \}} 

 
\newcommand{\LRceil}[1]{\left\lceil #1 \right\rceil}
\newcommand{\lrfloor}[1]{\lfloor #1 \rfloor} 
\newcommand{\LRfloor}[1]{\left\lfloor #1 \right\rfloor}

\newcommand{\LRparen}[1]{\left( #1 \right)}
\newcommand{\lrparen}[1]{( #1 )}
\newcommand{\LRbra}[1]{\left[ #1 \right]}

    
\newcommand{\NN}{\mathbb{N}}    
\newcommand{\RR}{\mathbb{R}}



\newcommand{\bin}{ \{ 0,1 \} }
\newcommand{\bit}[1]{\{0,1\}^{ #1 }}

\newcommand{\Hash}{\mathsf{H}}
\newcommand{\hash}{\mathsf{h}}

\newcommand{\secpar}{\kappa}
\newcommand{\msg}{\mathsf{m}}

\newcommand{\Exp}[2]{\mathsf{Exp}_{ #1}^{ #2 }}

\newcommand{\adv}{\mathsf{Adv}}
\newcommand{\Adv}[2]{\adv_{ #1}^{ #2 }}
\newcommand{\negl}[1]{\mathsf{negl}( #1 )}
\newcommand{\poly}[1]{\mathsf{poly}( #1 )}

 \newcommand{\bigO}[1]{\mathcal{O}( #1 )}



\newcommand{\system}{\textsf{IoT-REX}}
\newcommand{\Owner}{\mathsf{O}}
\newcommand{\SysM}{\mathsf{SM}}
\newcommand{\IoTvar}{\mathsf{D}}
\newcommand{\IoT}[1]{\IoTvar_{#1}}
\newcommand{\Mdlvar}{\mathsf{M}}
\newcommand{\Mdl}[1]{\Mdlvar_{#1}}

\newcommand{\cmd}{\mathsf{cmd}}
\newcommand{\acmd}{\widehat{\cmd}}

\newcommand{\ID}{\mathcal{I}}
\newcommand{\IDact}{\ID_{\mathsf{Act}}}
\newcommand{\IDdsg}{\ID_{\mathsf{Dsg}}}

\newcommand{\CMDVS}{\Pi}
\newcommand{\Setup}{{\sf Setup}}
\newcommand{\KeyGen}{{\sf KeyGen}}
\newcommand{\Sign}{{\sf Sign}}
\newcommand{\Vrfy}{{\sf Vrfy}}
\newcommand{\CMDVSall}{\CMDVS = \lrparen{\Setup, \AB \KeyGen, \AB \Sign, \AB \Vrfy}}

\newcommand{\Assign}{{\sf Assign}}

\newcommand{\id}{\mathsf{id}}
\newcommand{\M}{\mathcal{M}}				
\newcommand{\D}{\mathcal{D}}				
\newcommand{\VerS}{\mathcal{V}}				

\newcommand{\pp}{{\sf pp}}
\newcommand{\sk}{{\sf sk}}
\newcommand{\vrk}[1]{\vk_{#1}}
\newcommand{\len}{{\sf len}}

\newcommand{\flag}{\mathsf{flag}}
\newcommand{\qM}{\mathcal{Q}}

\newcommand{\UF}{\mathsf{UF}}
\newcommand{\Cons}{\mathsf{Cons}}

\newcommand{\KGOra}{\mathsf{O}_\textsc{kg}}

\newcommand{\SOra}{\mathsf{O}_\textsc{s}}

\newcommand{\KGOracle}[1]{\mathsf{O}_\textsc{kg}(\pp, \sk, #1)}

\newcommand{\SOracle}[1]{\mathsf{O}_\textsc{s}(\sk,#1)}

\newcommand{\VKList}{\mathsf{List}_\textsc{vk}}


\newcommand{\vk}{{\sf vk}}

\newcommand{\Dv}{\D_\textsc{v}}				



\newcommand{\AMQ}{\Pi_\textsc{amq}}
\newcommand{\Bloom}{\Pi_\textsc{Bloom}}
\newcommand{\Gen}{{\sf Gen}}
\newcommand{\Insert}{{\sf Insert}}
\newcommand{\Lookup}{{\sf Lookup}}

\newcommand{\AMQall}{\AMQ = \lrparen{\Gen, \AB \Insert, \AB \Lookup}}
\newcommand{\Bloomall}{\Bloom = \lrparen{\Gen, \AB \Insert, \AB \Lookup}}

\newcommand{\U}{\mathcal{U}}				
\newcommand{\X}{\mathcal{X}}				
\newcommand{\PS}{\mathcal{S}}				
\newcommand{\T}{{\sf T}}
\newcommand{\hT}{\widehat{\T}}
\newcommand{\prm}{{\sf par}}

\newcommand{\aux}{{\sf aux}}
\newcommand{\true}{\texttt{true}}
\newcommand{\false}{\texttt{false}}
\newcommand{\accept}{\texttt{accept}}

\newcommand{\fp}[1]{\mu_{ #1 }}                     

\newcommand{\Q}{\mathcal{Q}}


\newcommand{\DS}{\Pi_\textsc{ds}}
\newcommand{\SigGen}{{\sf SigGen}}
\newcommand{\SigSign}{{\sf SigSign}}
\newcommand{\SigVer}{{\sf SigVer}}

\newcommand{\DSall}{\DS = \lrparen{\SigGen, \AB \SigSign, \AB \SigVer}}

\newcommand{\sigk}{\mathsf{sigk}}
\newcommand{\verk}{\mathsf{verk}}
\newcommand{\sigmaDS}{\sigma_{\textsc{ds}}}

\newcommand{\sM}{\mathcal{M}_{\mathsf{s}}}

\newcommand{\CMA}{\mathsf{CMA}}
\newcommand{\UFCMA}{\UF\text{-}\CMA}

\newcommand{\SignOra}[1]{\mathsf{O}_{\textsc{Sig}}(\sigk, #1 )}








%






\newcommand{\A}{\mathsf{A}}






 

\makeatletter

\makeatother

\makeatletter
\newcommand{\figcaption}[1]{\def\@captype{figure}\caption{#1}}
\newcommand{\tblcaption}[1]{\def\@captype{table}\caption{#1}}
\makeatother

\let\oldnl\nl
\newcommand{\nonl}{\renewcommand{\nl}{\let\nl\oldnl}}

\newtheorem{theorem}{Theorem} 
\newtheorem{lemma}{Lemma}

\newtheorem{definition}{Definition}  
  
\newtheorem{remark}{Remark}  
\newtheorem{corollary}{Corollary}


\begin{document}
\title{$\system$: A Secure Remote-Control System for IoT Devices from\\Centralized Multi-Designated Verifier Signatures\thanks{This is the full version of a paper that will appear in ISPEC 2023~\cite{ispecWYS23}.}}

\author[1,2]{Yohei Watanabe}
\author[2,3]{Naoto Yanai}
\author[4]{Junji Shikata}

\affil[1]{
	The University of Electro-Communications, Tokyo, Japan.
	}
\affil[2]{
	Japan Datacom Co., Ltd., Tokyo, Japan.
	}
\affil[3]{
	Osaka University, Osaka, Japan.
	}
\affil[4]{
	Yokohama National University, Yokohama, Japan.
	\vspace*{.5em}
	}
\affil[ ]{
	\texttt{watanabe@uec.ac.jp}, \ \ 
	\texttt{yanai@ist.osaka-u.ac.jp}, \ \
	\texttt{shikata@ynu.ac.jp}
	}

\maketitle

\begin{abstract}
IoT technology has been developing rapidly, while at the same time, notorious IoT malware such as Mirai is a severe and inherent threat. 
We believe it is essential to consider systems that enable us to remotely control infected devices in order to prevent or limit malicious behaviors of infected devices.
In this paper, we design a promising candidate for such remote-control systems, called \emph{$\system$ (REmote-Control System for IoT devices)}.
$\system$ allows a systems manager to designate an arbitrary subset of all IoT devices in the system, and every device can confirm whether or not the device itself was designated;
if so, the device executes a command given by the systems manager. 
Towards realizing $\system$, we introduce a novel cryptographic primitive called \emph{centralized multi-designated verifier signatures} (CMDVS).
Although CMDVS works under a restricted condition compared to conventional MDVS, it is sufficient for realizing $\system$.
We provide an efficient CMDVS construction from any approximate membership query structures and digital signatures, yielding compact communication sizes and efficient verification procedures for $\system$.
We then discuss the feasibility of $\system$ through the cryptographic implementation of the CMDVS construction on a Raspberry Pi.
Our promising results demonstrate that the CMDVS construction can compress communication size to about 30\% compared to a trivial construction, and thus its resulting $\system$ becomes three times faster than a trivial construction over typical low-power wide area networks with an IoT device. 
\end{abstract}

\thispagestyle{empty}

\if0
\newpage
\thispagestyle{empty}
\tableofcontents
\newpage

\clearpage
\addtocounter{page}{-2}

\newpage

\setcounter{footnote}{0}
\fi

\section{Introduction} 
\label{sec:Introduction}

Internet-of-Things technologies have been spreading rapidly and enriching our lives.
According to a Cisco report~\cite{Cisco2014}, tens of billions of IoT devices are expected to be deployed over the next few years.
On the other hand, along with the rapid development of IoT technologies, we have to focus our efforts on cybersecurity, though there are several constraints on that in the context of IoT devices.
For example, most IoT devices, unfortunately, do little to protect the data stored inside, mostly likely due to the development cost and restricted resources.
This has a profound effect on the real world; 
for instance, a notorious IoT malware `Mirai' infected many IoT devices, turning them into botnets.
The botnets infected nearly 65,000 IoT devices in its first 20 hours~\cite{antonakakis2017understanding}.
The widespread outbreak of Mirai had a considerable impact on the world.
As described above, most IoT devices do not have sufficient resources to implement and deploy security functions for each specific security threat~\cite{noor2019current}.
Hence, there seem to be no versatile solutions~\cite{bertino2017botnets}.

One possible approach is to design cryptographic schemes that can be used in cooperation with existing methods such as controlling~\cite{ainaWYS21,xu2019dominance,suzaki2020reboot} or surveillance~\cite{nguyen2019diot,raza2013svelte,lei2020secwir} of individual devices.
Cryptographic schemes can provide \emph{provable security} that theoretically guarantees the security of a cryptographic protocol through mathematical proofs.

In this paper, we present a novel system based on cryptography, \emph{$\system$} (REmote-Control System for IoT devices), which has an arbitrary subset of all IoT devices and executes any commands remotely and securely. 
The most likely scenario is to disable compromised IoT devices, e.g., those infected with malware.
$\system$ allows such devices to be brought to a halt as soon as possible. 
It is expected to, for example, stop and reboot malware-infected devices all at once, whereby a sender can communicate with many devices simultaneously with a single piece of data.

We note that the efficient design of $\system$ is \textit{non-trivial}. 
One might think $\system$ can be realized with a standard digital signature, regarding an arbitrary subset of devices' identifiers as a single message and signing it. 
However, it is insufficient because the communication size is linear in the size of the subset. 
Since IoT devices are resource-constrained~\cite{iftikhar2021resourceconstrained}, their battery life is also limited. 
Even if the latency on a CPU is small enough, the communication should be used sparingly to avoid consuming energy too quickly as well~\cite{usenixKHAP+19}.
Namely, we need to achieve the small communication size as well as the functionality to choose an arbitrary subset of receivers. 
As an advanced cryptographic approach, broadcast authentication~\cite{spPCTS00} might be employed;
it can broadcast a single piece of data to many receivers, i.e., IoT devices, with data authenticity for controlling them. 
However, existing broadcast authentication schemes~\cite{spCP10,ccsPer01,spPCTS00,tifsShi17,tcsSW01,baee2022ali} except for a recent work~\cite{ainaWYS21} cannot support the functionality that a sender chooses an arbitrary subset of receivers. 
Though the only exception~\cite{imaccKWS21}, i.e., the broadcast authentication scheme that supports such functionality, may be applied to $\system$, it still has the major drawback of communication sizes since it just combines individual authenticators for all designated devices.


To this end, we propose a novel cryptographic scheme, centralized multi-designated verifier signatures (CMDVS), as a core primitive for $\system$. 
CMDVS is an extension of multi-designated verifier digital signature schemes~\cite{ipsLV07,nssZAYS12,icicsLV04,tccDHM+20}.
Unlike conventional schemes, anyone can be both a signer and a verifier, while entities have completely different roles in CMDVS; there is only one sender and many verifiers.
Although CMDVS works under more restricted conditions than conventional MDVS, it can be constructed efficiently and is sufficient for realizing $\system$. 
We define the security of CMDVS formally and then propose an efficient CMDVS construction from any approximate membership query (AMQ) structure and digital signatures, which yields an efficient design for $\system$. 
The proposed construction is provably secure. 
Note that we show CMDVS provides more efficient communication sizes than the two trivial approaches described in the previous paragraph.

We also discuss the feasibility of $\system$ for IoT devices through the implementation of the proposed CMDVS construction with EdDSA~\cite{eddsa} and vacuum filters~\cite{vldbWZSQ19}, which is one of the efficient AMQ structures. 
We then demonstrate that the proposed CMDVS construction can compress communication size to about 30\% compared to the trivial approach with standard digital signatures. (Hereafter, we call this approach \textit{trivial construction}.) 
We also show that our scheme can also compress communication size to about 4\% compared to the broadcast-authentication-based approach~\cite{ainaWYS21}, which is simply called \textit{broadcast authentication} hereafter. 
Our promising results also show that, by virtue of the compression of the communication size, $\system$ is three times faster than the trivial construction and 25 times faster than the broadcast authentication over typical low-power wide area networks with a Raspberry Pi3 as an IoT device. 
We also evaluate the communication overheads and power consumption for low-power wide area networks. 
We have released our source code for reproducibility and subsequent work (\url{https://github.com/naotoyanai/fiilter-signature_ABA}).

To sum up, our primary goal is to design $\system$, and we make the following technical contributions: 
\begin{itemize}
    
    \item We propose CMDVS as a novel cryptographic primitive to instantiate $\system$. 
    We formally define and prove the security of the proposed construction. 

    \item We give an efficient instantiation of a CMDVS scheme from the (fine-tuned) Bloom filter. We provide theoretical performance analysis, and show our CMDVS instantiation is three times more compact than the trivial construction.
    
    
    \item Through an implementation, we experimentally demonstrate that the proposed CMDVS construction can compress communication size to about 30\% compared to the trivial construction and 4\% compared to the broadcast authentication. 
    We have released our code via GitHub.  
    
    \item We discuss the feasibility of $\system$, including the communication overheads for low-power wide area networks and the power consumption. 
\end{itemize}

\section{$\system$: \textsf{RE}mote-\textsf{C}ontrol \textsf{S}ystem for \textsf{IoT} Devices} \label{sec:System}
\subsection{System Setting} \label{sec:SystemGoal}
Suppose a large, simple system called $\system$ (REmote-Control System for IoT devices) among a systems manager and many IoT devices such as sensors and surveillance cameras below.

\mpara{$\system$: An Overview}
There are a systems manager and a number of IoT devices.
For some reason (e.g., based on data from outside sources such as device owner's request and information on vulnerable devices), the systems manager generates and broadcasts authenticated information in order to make only designated IoT devices execute a command $\cmd$ remotely and securely, while the devices themselves can detect a forgery of the authenticated information that aims to change the designated-device set and/or the command.

\mpara{Expected Applications}
We believe there are various applications of $\system$.
For example, it enables one to put only designated devices to sleep, e.g., in order to extend their operational lives.
At the same time, it prevents an adversary from forging the authenticated information on the `sleep' command and which devices are designated.
Besides, let us explain another important application:
the IoT devices usually communicate with each other via the Internet and could be infected with malware.
As explained in the introduction, it seems difficult to completely eliminate the chance of devices being infected with malware, and IoT malware spreads rapidly between IoT devices once the initial infection occurs.
Therefore, $\system$ can bring infected devices to a halt as soon as possible in order to prevent or limit malicious behavior by said devices (e.g., DDoS attacks), rather than preventing the initial infection.

\subsection{System Model} \label{sec:SystemModel}
Based on the above discussion, we formally define $\system$ as a protocol among the following entities:
a device owner $\Owner$, a systems manager $\SysM$, and IoT devices $\IoTvar$.
%
Let $\ID$ be a set of possible identifiers in the system, and $\IDact$ be an identifier set of activated devices, i.e., IoT devices taking part in the system.
We denote an identifier set of devices designated by $\SysM$ so that they execute a command $\cmd$ by $\IDdsg$.
We have $\IDdsg \subset \IDact \subset \ID$.

\mpara{System Overview}
Suppose that the device owner $\Owner$ manages many IoT devices $\set{\IoT{\id}}_{\id\in\IDact}$.
Note that $\Owner$ can dynamically add and remove IoT devices.
Let us explain the protocol overview as follows.
\begin{enumerate}
    \renewcommand{\labelenumi}{\ctext{\arabic{enumi}}}
    \item[\ctext{1}]
    %
    $\Owner$ sends $\SysM$ a request to have an arbitrary subset (i.e., $\IDdsg$) of all devices execute a command $\cmd$.
    \item[\ctext{2}] 
    %
    $\SysM$ generates an authenticated command $\acmd$, which is an authenticated version of $\cmd$ and contains the information on the designated devices $\IDdsg$, and broadcasts it to \emph{all} devices. 
    \item[\ctext{3}] 
    %
    All IoT devices $\set{\IoT{\id}}_{\id\in\IDact}$ (including non-designated ones) receive $\acmd$ and check its validity. If $\acmd$ is \emph{not} valid, the devices reject it and terminate the process.
    \item[\ctext{4}] 
    %
    If an IoT device $\IoT{\id}$ confirms that the authenticated command $\acmd$ is valid and directed at the device, $\IoT{\id}$ executes $\cmd$.
    Otherwise, i.e., if $\acmd$ is valid but does not designate $\IoT{\id}$, the device does nothing and terminates the process.
    \if0
    \item[\ctext{5}] 
    %
    $\SysM$ updates the identifier set $\IDact$ of the currently activated devices as follows: $\SysM$ identifies IoT devices that have been designated but have not executed $\cmd$, either accidentally or intentionally. (We will discuss how to realize this in Section~\ref{sec:Description}.) $\SysM$ sends $\Owner$ feedback about the current $\IDact$.
    \fi
\end{enumerate}

\subsection{Assumptions and Requirements} \label{sec:Assumptions_Requirements}
\para{Adversarial Model and Assumptions}
Suppose that the systems manager $\SysM$ broadcasts an authenticated command $\acmd$ to all devices $\set{\IoT{\id}}_{\id \in \IDact}$.
We assume an adversary $\A$ can eavesdrop, insert, delay, and modify all the transmitted information.
We also assume that $\A$'s main purpose is to maliciously modify authenticated commands so that some designated devices do not execute $\cmd$ and/or some non-designated devices execute $\cmd$.
More formally, we assume that $\A$ mainly aims to modify $\acmd$ in order to change a pair of $(\cmd,\IDdsg)$ to a different pair $(\cmd',\IDdsg')$ in order to accomplish any of the goals below:
\begin{enumerate}
    \renewcommand{\labelenumi}{(\alph{enumi})}
    \item At least one designated device $\IoT{\id}$ for $\id \in \IDdsg$ does not execute $\cmd$ as a regular process. 
    \item At least one designated device $\IoT{\id}$ for $\id \in \IDdsg$ executes $\cmd' \ (\neq \cmd)$ as a regular process.
    \item At least one non-designated device $\IoT{\id}$ for $\id \in \IDdsg'\setminus \IDdsg$ executes $\cmd'$, which might be the same as $\cmd$, as a regular process.
\end{enumerate}
Note that the above goals include that $\A$ tries to impersonate the systems manager $\SysM$ and create new (forged) authenticated commands.
However, we assume $\A$ is not capable of forging any CMDVS signature, which is a core element of authenticated commands $\acmd$, according to Def.~\ref{def:CMVDS_UF}, which will be defined later.

For simplicity, we assume that all devices receive the same information;
if authenticated commands are modified, all devices receive the modified ones.
We also note that preventing attacks in the physical layer is out of the scope, i.e., jamming. 
It can be prevented by existing techniques such as the spread spectrum~\cite{mpitziopoulos2007defending}.

\mpara{Requirements}
Following the discussion in the introduction and our system goal, the secure system for remotely controlling IoT devices, $\system$, should possess the following four properties.
\begin{itemize}
    \item \emph{Completeness}:
    Only designated devices $\set{\IoT{\id}}_{\id \in \IDdsg}$ execute a command $\cmd$ unless the corresponding authenticated command $\acmd$ is externally modified.
    In other words, any non-designated device $\IoT{\id}$, where $\id \in \IDact \setminus \IDdsg$, never executes $\cmd$ as long as it receives $\acmd$ as it is.
    The system might have allowable errors; a very small percentage of devices might not work as expected.
    This error seems likely in most large-scale applications.
    \item \emph{Integrity}:
    If an authenticated command $\acmd$ is externally modified, any device can detect it and reject $\acmd$.
    %
    %
    \item \emph{Scalablity}:
    The system allows a large number of IoT devices, e.g., up to a million.
    In particular, the size of authenticated commands should be small, i.e., it does not depend on the number of designated devices linearly.
    Ideally, it should be independent of the number of designated devices in the system.
    \item \emph{Light weight}:
    The devices' resources might be poor.
    Thus, the verification process executed by the devices should be efficient enough that, ideally, even microcomputers such as an ARM Cortex-M3 can run the process.
\end{itemize}

The first two requirements---completeness and integrity---are the fundamental properties to have $\system$ work well in practice.
The last two requirements---scalability and light weight---are also important properties for $\system$ since we focus on various IoT devices. including microcomputers.
Indeed, a trivial system can be constructed by an arbitrary digital signature or MAC: $\SysM$ just sends each designated IoT device a command $\cmd$ with its signature/MAC.
This trivial construction requires the $\bigO{d \cdot \secpar}$ communication size, where $d$ is the number of designated devices and $\secpar$ is a security parameter, whereas its verification process is lightweight since it requires only a single signature/MAC verification.
Hence, achieving both scalability and lightweight is another important goal for $\system$.

\section{Centralized Multi-Designated Verifier Signatures} \label{sec:CMDVS}

We introduce \emph{centralized MDVS} (CMDVS), which is a core cryptographic primitive for $\system$.
Unlike existing MDVS schemes~\cite{icicsLV04,tccDHM+20}, in CMDVS, we consider a situation where there are only one signer and multiple verifiers.
Note that CMDVS is not a special case of MDVS;
there are multiple users who are potential signers and/or verifiers in MDVS.

\mpara{Notations}
For any natural numbers $a,b\in\NN$ s.t.\ $a \le b$, $\set{a,\ldots,b}$ is denoted by $[a,b]$.
In particular, if $a=1$, we denote $[b] \ceq \{1,\ldots,b\}$.
For any real numbers $a,b \in \RR$ s.t.\ $a \le b$, let $(a,b]$ be a half-open interval.
Concatenation is denoted by $\|$.
For a finite set $\mathcal{X}$, we denote by $|\mathcal{X}|$ the cardinality of $\mathcal{X}$.
For any algorithm $\Ad$, $\mathsf{out} \gets \Ad(\mathsf{in})$ means that $\Ad$ takes $\mathsf{in}$ as input and outputs $\mathsf{out}$.
Throughout the paper, we denote by $\secpar$ a security parameter and consider probabilistic polynomial-time algorithms (PPTAs).
We say a function $\negl{\cdot}$ is negligible if for any polynomial $\poly{\cdot}$, there exists some constant $\secpar_0\in\mathbb{N}$ such that $\negl{\secpar}<1/\poly{\secpar}$ for all $\secpar\ge \secpar_0$.
In security games, a flag $\flag$, which indicates an adversary's winning condition, is initialized as zero.

\subsection{Syntax} \label{sec:CMDVS_Syntax}
First of all, a signer runs $\Setup$ to get a public parameter $\pp$ and a signing key $\sk$.
The signer can run $\KeyGen$ with $(\pp,\sk)$ to generate a verification key $\vrk{\id}$ for any $\id \in \ID$.
Let $\VerS$ be a verifier set, i.e., a set of identities whose key pairs have been generated by $\KeyGen$.
To create a signature $\sigma$ so that only a designated-verifier set $\Dv \subset \VerS$ accepts it, the signer executes $\Sign$ with $\sk$, $\Dv$, and a message $\msg$.
Each verifier can check the validity of $(\msg, \sigma)$ by $\Vrfy$ with $\pp$ and $\vrk{\id}$ if the verifier was designated by the signer, i.e., $\id \in \Dv$.
In other words, for any non-designated verifier $\id \notin \Dv$, $\Vrfy$ with $(\pp,\vrk{\id})$ outputs $\bot$ even if the pair $(\msg, \sigma)$ is a valid one.

CMDVS $\CMDVSall$ for an identity set $\ID$ is defined as follows.

\begin{enumerate}
    \renewcommand{\labelenumi}{--}
    %
    \item $\Setup(1^{\secpar}) \to (\pp,\sk)$: a probabilistic algorithm for setup.
    It takes a security parameter $1^{\secpar}$ as input, and outputs a public parameter $\pp$ and a signing key $\sk$.
    It initializes a verifier set $\VerS$.
    \item $\KeyGen(\pp, \sk,\id) \to \vrk{\id}$: an algorithm for verification-key generation.
    It takes $\pp$, $\sk$, an identity $\id \in \ID$ as input, and outputs a verification key $\vrk{\id}$ for $\id$.
    It also updates $\VerS \ceq \VerS \cup \set{\id}$.
    \item $\Sign(\sk, \Dv, \msg, \len) \to \sigma \ / \ \bot$: a signing algorithm.
    It takes $\sk$, a designated-verifier set $\Dv \subset \VerS$, a message $\msg \in \M$, and the maximum length of a signature $\len$ as input, and outputs the signature $\sigma$ for $\Dv$ or $\bot$, which indicates ``failure of signature generation.''
    \item $\Vrfy(\pp, \vrk{\id}, \msg, \sigma) \to \top \ / \ \bot$: a deterministic algorithm for verification.
    It takes $\pp$, $\vrk{\id}$, $\msg$ and $\sigma$ as input, and outputs $\top$ indicating ``accept'' or $\bot$ indicating ``reject.''
\end{enumerate}

\begin{remark}[On the Maximum Length $\len$ of Signatures] \label{rem:CMDVS_Syntax} {\rm
    CMDVS allows a signer to specify the maximum length $\len$ when generating the corresponding signature since we aim to design $\system$ so that it is compatible with various environments, including wireless ones, which often restricts bandwidth.
    The length specification feature enables us to generate signatures so that they fit in the channel's bandwidth.
    Indeed, although a trivial construction in Section~\ref{sec:Construction_Trivial} produces signatures whose length depends on the number of designated verifiers, the proposed generic construction in Section~\ref{sec:Construction_Generic} allows flexible parameter settings, i.e., a signer first fixes $\len$ and then chooses other parameters (see also Remark~\ref{rem:CMDVS_ParameterSetting}).
}\end{remark}

\subsection{Correctness and Security} \label{sec:CMDVS_Security}
We introduce the correctness property and security notions for CMDVS.

\mpara{Oracles}
We consider the following oracles.
Let $\VKList$ and $\qM$ be an array and a set, respectively, and they are initialized as empty ones.
\begin{itemize}
    \setlength{\itemsep}{0em}
    \setlength{\parskip}{0em}
    \item Key-generation oracle $\KGOracle{\cdot}$:
    For any $\id \in \ID$, it runs $\KeyGen(\pp, \sk,\id)$ to get $\vrk{\id}$.
    It adds $\id$ and $\vrk{\id}$ to $\VerS$ and $\VKList[\id]$, respectively, and returns $\vrk{\id}$.
    \item Signing oracle $\SOracle{\cdot}$:
    For any $(\Dv, \AB \msg, \AB \len)\in 2^{\VerS} \times \M \times \NN$, it returns $\Sign(\sk, \AB \Dv, \AB \msg, \AB \len)$.
    It adds $(\Dv, \AB \msg)$ to $\qM$ if $\Sign(\sk, \AB \Dv, \AB \msg, \AB \len)\neq \bot$. 
    %
\end{itemize}

\begin{remark}[On Provable Anonymity] \label{rem:CMDVS_Oracle} {\rm
    An adversary obtains all verification keys via the above key-generation oracle, i.e., all verification keys are public.
    Namely, unlike ordinary MDVS, we consider security against unbounded collusion of verifiers.
    In this setting, (provable) anonymity of designated verifiers~\cite{tccDHM+20}, which is an additional security notion for MDVS, cannot be achieved in principle since the verification algorithm works with only public information.
    It might be possible by restricting the range of verification keys that the adversary can get, though it would be expected to make CMDVS less efficient.
}\end{remark}

\mpara{Correctness}
The correctness property guarantees that each verifier correctly obtains the output of $\Vrfy$ algorithm unless signatures are maliciously modified.

\begin{definition}[Correctness] \label{def:CMDVS_Correctness} {\rm
Let $\CMDVS$ be a CMDVS scheme.
$\CMDVS$ is said to meet correctness if for any $\msg\in\M$, for any $\VerS \subset \ID$ such that $|\VerS| = \poly{\secpar}$, any $\Dv \subset \ID$, and for any $\id \in \VerS$, it holds that
\begin{align*}
    \left\{
    \begin{array}{ll}
        \Pr\LRbra{\Vrfy(\pp, \vrk{\id}, \msg, \sigma) \to \true} \ge 1-\negl{\secpar} & \text{if } \id \in \Dv,  \\
        \Pr\LRbra{\Vrfy(\pp, \vrk{\id}, \msg, \sigma) \to \false} \ge 1-\negl{\secpar} & \text{if } \id \in \VerS\setminus\Dv,
    \end{array}
    \right.
\end{align*}
where $(\pp,\sk) \gets \Setup(1^\secpar)$, $\vrk{\id}\gets\KeyGen(\pp,\sk,\id)$ for all $\id \in \VerS$, and $\sigma \ (\neq \bot) \gets \Sign(\sk,\Dv,\msg,\len)$.
}\end{definition}

\begin{remark}[On Designated-Verifier Sets] \label{rem:CMDVS_Correctness} {\rm
    As can be seen in Def.~\ref{def:CMDVS_Correctness}, a designated-verifier set $\Dv$ need not necessarily be a subset of the verification set $\VerS$.
    This means that the signer can designate identities before the corresponding verification keys are generated.
    Indeed, security definitions below do not restrict the range of a designated-verifier set $\Dv^\star$ chosen by an adversary.
}\end{remark}

\begin{figure}[t]
\begin{minipage}[t]{0.48\hsize}
    \centering
    \begin{algorithm}[H]
        \floatname{algorithm}{Experiment:}
        \caption{$\Exp{\CMDVS,\A}{\UF}(\secpar)$}
        \label{exp:CMDVS_UF}
        \begin{algorithmic}[1]
            \STATE $(\pp,\sk)\gets\Setup(1^\secpar)$
            \STATE $(\Dv^\star,\msg^\star,\sigma^\star)\gets \A^{\KGOra,\SOra}(1^\secpar,\pp)$ 
            \IF{$(\Dv^\star,\msg^\star)\notin\qM$} 
                \IF{$\exists \id^\star\in\Dv^\star$ s.t.\ $\Vrfy(\pp, \vrk{\id^\star},\msg^\star,\sigma^\star) \to \top$}
                    \STATE $\flag \ceq 1$
                \ENDIF
            \ENDIF
            \RETURN $\flag$
        \end{algorithmic}
    \end{algorithm}
    \vspace*{-1em}
    \caption{The unforgeability game for CMDVS.}
    \label{fig:CMDVS_UF}
\end{minipage}
%
\hfill
%
\begin{minipage}[t]{0.5\hsize}
    \centering
    \begin{algorithm}[H]
        \floatname{algorithm}{Experiment:}
        \caption{$\Exp{\CMDVS,\A}{\Cons}(\secpar)$}
        \label{exp:CMDVS_Consistency}
        \begin{algorithmic}[1]
            \STATE $(\pp,\sk)\gets\Setup(1^\secpar)$
            \STATE $(\Dv^\star,\msg^\star,\sigma^\star)\gets \A^{\KGOra,\SOra}(1^\secpar,\pp)$ 
            \IF{$\exists \id \in \Dv^\star \text{ s.t.\ } \Vrfy(\pp, \vrk{\id},\msg^\star,\sigma^\star) \to \top$}
                \IF{$\exists \id^\star \in \Dv^\star \text{ s.t.\ } \Vrfy(\pp, \vrk{\id^\star},\msg^\star,\sigma^\star) \to \bot$}
                    \STATE $\flag \ceq 1$
                \ENDIF
            \ENDIF
            \RETURN $\flag$
        \end{algorithmic}
    \end{algorithm}
    \vspace*{-1em}
    \caption{The consistency game for CMDVS.}
    \label{fig:CMDVS_Cons}
\end{minipage}
\end{figure}

\para{Unforgeability}
We define unforgeability as a standard security notion for CMDVS.
Intuitively, unforgeability guarantees that no adversary can (maliciously) modify a signature for $\Dv^\star\subset\VerS$ so that at least one non-designated verifier $\id\in\VerS\setminus\Dv^\star$ accepts it.
Specifically, we consider a security game, given in Fig.~\ref{fig:CMDVS_UF}, against an adversary $\A$, and let $\Adv{\CMDVS,\A}{\UF}(\secpar) \ceq \Pr[\Exp{\CMDVS,\A}{\UF}(\secpar)=1]$ be $\A$'s advantage in the game.

\begin{definition}[Unforgeability] \label{def:CMVDS_UF} {\rm
    Let $\CMDVS$ be a CMDVS scheme.
    $\CMDVS$ is said to meet unforgeability if for any sufficiently large $\secpar \in \NN$ and any PPTA $\A$, it holds $\Adv{\CMDVS,\A}{\UF}(\secpar) < \negl{\secpar}$.
}\end{definition}

\para{Consistency}
We consider \emph{consistency}, which was originally introduced by Damg\r{a}rd~et~al.~\cite{tccDHM+20} as a security notion for ordinary MDVS.
Roughly speaking, consistency guarantees that if at least one designated verifier accepts a signature, then all others also do so.
This notion is important in our setting, i.e., remote-control systems for IoT devices, for several possible reasons:
for example, it seems difficult to collect the acknowledgment messages from all IoT devices;
or, there might be only downstream communication from the systems manager to IoT devices.
Therefore, it seems hard to check which designated verifiers accepted a signature (without being maliciously modified).
Consistency allows the signer to just check a verification result of a specific designated verifier in order to confirm all verifiers accept the signature.%
\footnote{We assume all verifiers (including non-designated ones) receive the same data regardless of whether it is modified.}

Specifically, we consider a security game, given in  Fig.~\ref{fig:CMDVS_Cons}, against an adversary $\A$, and let $\Adv{\CMDVS,\A}{\Cons}(\secpar) \ceq \Pr[\Exp{\CMDVS,\A}{\Cons}(\secpar)=1]$ be $\A$'s advantage in the game.

\begin{definition}[Consistency] \label{def:CMVDS_Cons} {\rm
    Let $\CMDVS$ be a CMDVS scheme.
    $\CMDVS$ is said to meet consistency if for any sufficiently large $\secpar \in \NN$ and any PPTA $\A$, it holds $\Adv{\CMDVS,\A}{\Cons}(\secpar) < \negl{\secpar}$.
}\end{definition}

\section{CMDVS Constructions} \label{sec:Construction}

\subsection{Building Blocks} \label{sec:BuildingBlocks}

\para{Digital Signatures}
A digital signature $\DS = (\SigGen, \AB \SigSign, \AB \SigVer)$ is defined as follows.

\begin{enumerate}
    \renewcommand{\labelenumi}{--}
    %
    \item $\SigGen(1^\secpar) \to (\sigk,\verk)$:
    it takes a security parameter $\secpar$ as input and outputs a pair of a signing key and verification key $(\sigk,\AB \verk)$.
    \item $\SigSign(\sigk,\msg) \to \sigma$:
    it takes a signing key $\sigk$ and a message $\msg\in\M$ as input and outputs a signature $\sigma$.
    \item $\SigVer(\verk,(\msg,\sigma)) \to \top / \bot$:
    it takes a verification key $\verk$ and a pair of a message and a signature $(\msg,\sigma)$ as input and outputs $\top$ or $\bot$.
\end{enumerate}

\begin{definition}[Correctness] \label{def:DS_Correctness} {\rm
    Let $\DS$ be a digital signature scheme.
    For all $\secpar\in\NN$ all $(\sigk,\verk)\gets\SigGen(1^\secpar)$, all $\msg\in\M$, $\SigVer(\verk,(\msg,\SigSign(\sigk,\msg))) = \top$ holds with overwhelming probability.
}\end{definition}

\begin{figure}
    \centering
    \begin{algorithm}[H]
        \floatname{algorithm}{Experiment:}
        \caption{$\Exp{\DS,\A}{\CMA}(\secpar)$}
        \label{exp:DS_UF-CMA}
        \begin{algorithmic}[1]
            \STATE $(\sigk,\verk)\gets\SigGen(1^\secpar)$
            \STATE $(\msg^\star,\sigma^\star)\gets \A^{\SignOra{\cdot}}(1^\secpar,\verk)$
            \IF{$\msg^\star\notin\sM \wedge \SigVer(\verk,\msg^\star,\sigma^\star) \to \top$}
                \STATE $\flag \ceq 1$
            \ENDIF
            \RETURN $\flag$
        \end{algorithmic}
    \end{algorithm}
\if0
    \vspace*{0em}
    \begin{algorithm}[H]
        \floatname{algorithm}{Oracle:}
        \caption{$\SignOra{\msg}$}
        \label{exp:DS_UF-CMA_SignOracle}
        \begin{algorithmic}[1]
            \STATE  $\sM \ceq \sM \cup \set{\msg}$
            \RETURN $\SigSign(\sigk,\msg)$
        \end{algorithmic}
    \end{algorithm}
\fi
    \vspace*{-1em}
    \caption{ A $\UFCMA$ game for a digital signature $\DS$. $\SignOra{\cdot}$ is a signing oracle that returns $\SigSign(\sigk,\msg)$ for any query $\msg\in\M$ and adds $\msg$ to $\sM$.}
    \label{fig:DS_UF-CMA}
\end{figure}

A standard security notion for digital signatures is defined by a $\UFCMA$ game against a PPTA $\A$ in Fig.~\ref{fig:DS_UF-CMA}.

\begin{definition}[$\UFCMA$] \label{def:DS_UF-CMA} {\rm
    Let $\DS$ be a digital signature scheme.
    $\DS$ is said to be $\UFCMA$ secure if for sufficiently large $\secpar \in \NN$ and any PPTA $\A$, it holds $\Adv{\DS,\A}{\CMA}(\secpar) \ceq \Pr[\Exp{\DS,\A}{\CMA}(\secpar)=1] < \negl{\secpar}$.
}\end{definition}

\mpara{Approximate Membership Query (AMQ) Structures}
For an arbitrary set $\U \subset \bit{*}$, an AMQ data structure $\AMQ = (\Gen, \AB \Insert, \AB \Lookup)$ over $\U$ is defined as follows.%
\footnote{Although there are various AMQ structures supporting deletion operations, we do not consider them since we do not require deletion operations for our schemes.}

\begin{enumerate}
    \renewcommand{\labelenumi}{--}
    %
    \item $\Gen(\U,\prm) \to (\T,\aux)$:
    it takes $\U$ and a parameter $\prm$ as input, and outputs an initial structure $\T$ and auxiliary information $\aux$.
    The parameter $\prm$ varies depending on concrete AMQ structure constructions.
    \item $\Insert(\T,x,\aux) \to \T'$:
    it takes a data structure $\T$, an element $x\in\U$, auxiliary information $\aux$ as input, and outputs an updated structure $\T'$.
    \item $\Lookup(\T,x,\aux) \to \true / \false$:
    it takes a data structure $\T$, an element $x\in\U$, auxiliary information $\aux$ as input, and outputs $\true$ or $\false$.
\end{enumerate}

An AMQ structure meets the following completeness, while it allows false positives to make the structure size smaller and its probability can be bounded.
Note that false negatives never occur.

\begin{definition}[Completeness] \label{def:AMQ_Completeness} {\rm
    Let $\AMQ$ be an AMQ sturcture over $\U$.
    For any $\prm$, any $(\T_0,\aux)\gets \Gen(\U,\prm)$, any $\PS=\set{x_1, \AB \ldots, \AB x_{|\PS|}}\subset\U$, we define $\hT\ \ceq \T_{|\PS|}$ as $\T_i \gets \Insert(\T_{i-1},x_i,\aux)$ for $i \in [|\PS|]$.
   Then, for all $x\in\PS$, it holds $\Pr[\Lookup(\hT,x,\aux)=\true]=1$.
}\end{definition}

\begin{definition}[Bounded False-Positive Probability] \label{def:AMQ_FalsePositive} {\rm
    Let $\AMQ$ be an AMQ structure over $\U$, and suppose that $\hT$ is generated as in Def.~\ref{def:AMQ_Completeness} and $n \ceq |\PS|$.
    Then, there exists $\fp{n} \in (0,1]$ such that it holds $\Pr[\Lookup(\hT,x,\aux)=\true] \le \fp{n}$ for any $x\in\U\setminus\PS$, where the probability is over $\Gen$ and $\Insert$.
}\end{definition}

AMQ structures mainly aim to compress the description length of $\PS$ by allowing false positive errors.
Therefore, the size of the structure $\hT$ should be smaller than the following trivial solutions:
(1) encode each element of $\PS$ and list them, i.e., $|\PS| \cdot \log_2 |\U|$ bits; and
(2) prepare an $|\U|$-bit string and set every $i$-th bit to one if and only if $x_i \in \PS$.
Namely, it should hold $|\hT| \le \min\{|\PS| \cdot \log_2 |\U|, |\U|\}$.

\if0
\begin{definition}[Compactness] \label{def:AMQ_Compactness} {\rm
    Let $\AMQ$ be an AMQ structure, and suppose that $\hT$ is generated as in Def.~\ref{def:AMQ_Completeness}.
    Then, the size of the structure $\hT$ is smaller than the trivial solutions;
    it holds $|\hT| \le \min\{|\PS| \cdot \log_2 |\U|, |\U|\}$.
}\end{definition}
\fi

There are many instantiations of AMQ structures: the Bloom filter~\cite{comacmBlo70} and its variants~\cite{sodaPPR05,esaKM06}, cuckoo filter~\cite{conextFAKM14}, vacuum filter~\cite{vldbWZSQ19}, etc.
Although the Bloom filter has been theoretically well-analyzed due to its simple structure, recent constructions (e.g.,~\cite{conextFAKM14,vldbWZSQ19}) are (experimentally) more efficient in terms of structure sizes.
In Section~\ref{sec:InstantiationAndPerformance}, we will give formal description of the Bloom filter.

\subsection{Trivial Construction} \label{sec:Construction_Trivial}
A digital signature scheme $\DS = \lrparen{\SigGen,\SigSign,\SigVer}$ can be used to trivially construct a CMDVS scheme $\CMDVSall$ as follows.

\begin{enumerate}
    \renewcommand{\labelenumi}{--}
    %
    \item $\Setup(1^{\secpar})$:
    It runs $(\sigk,\verk) \gets \SigGen(1^\secpar)$ and returns $(\pp,\sk)$, where $\pp \ceq \verk$ and $\sk \ceq \sigk$.
    \item $\KeyGen(\pp,\sk,\id)$:
    It returns $\vrk{\id} \ceq \verk$.
    \item $\Sign(\sk,\Dv,\msg,\len)$:
    It runs $\sigmaDS \gets \SigSign(\sigk,\Dv\|\msg)$ and sets $\sigma \ceq (\Dv,\sigmaDS)$. It returns $\bot$ if $|\sigma|>\len$; it returns $\sigma$ otherwise.
    \item $\Vrfy(\pp,\vrk{\id},\msg,\sigma)$:
    If $\id\notin \Dv$, it returns $\bot$. Otherwise, it returns the output of $\SigVer(\verk,\lrparen{\Dv\|\msg, \AB \sigmaDS})$.
\end{enumerate}

\if0
\medskip
\noindent
$\Setup(1^{\secpar})$:
%
\begin{enumerate}
    \renewcommand{\labelenumi}{\arabic{enumi}).}
    \setlength{\itemsep}{0em}
    \setlength{\parskip}{0em}
    \item Run $(\sigk,\verk) \gets \SigGen(1^\secpar)$.
    \item Output $(\pp,\sk)$, where $\sk \ceq \sigk$ and $\pp \ceq \verk$.
\end{enumerate}

\noindent
$\KeyGen(\pp,\sk,\id)$:
%
\begin{enumerate}
    \renewcommand{\labelenumi}{\arabic{enumi}).}
    \setlength{\itemsep}{0em}
    \setlength{\parskip}{0em}
    \item Output $\vrk{\id} \ceq \verk$.
\end{enumerate}

\noindent
$\Sign(\sk,\Dv,\msg,\len)$:
%
\begin{enumerate}
    \renewcommand{\labelenumi}{\arabic{enumi}).}
    \setlength{\itemsep}{0em}
    \setlength{\parskip}{0em}
    \item Run $\sigmaDS \gets \SigSign(\sigk,\Dv\|\msg)$ and set $\sigma \ceq (\Dv,\sigmaDS)$.
    \item Output $\bot$ if $|\sigma|>\len$; output $\sigma$ otherwise.
\end{enumerate}

\noindent
$\Vrfy(\pp,\vrk{\id},\msg,\sigma)$:
%
\begin{enumerate}
    \renewcommand{\labelenumi}{\arabic{enumi}).}
    \setlength{\itemsep}{0em}
    \setlength{\parskip}{0em}
    \item If $\id\notin \Dv$, output $\bot$.
    \item Output the output of $\SigVer(\verk,\lrparen{\Dv\|\msg,\sigmaDS})$.
\end{enumerate}
\fi

The above construction clearly meets the correctness, unforgeability, and consistency. 
We omit the proof.

\begin{theorem} \label{thm:CMDVS_Trivial} {\rm
    If $\DS$ meets $\UFCMA$ security, the above CMDVS scheme $\CMDVS$ meets correctness, unforgeability, and consistency.
}\end{theorem}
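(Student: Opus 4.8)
The plan is to verify each of the three properties directly, since the trivial construction is essentially a thin wrapper around the underlying signature scheme $\DS$. The overall strategy is to observe that in this construction every verifier holds the \emph{same} verification key $\vrk{\id} = \verk$, and a CMDVS signature is simply $\sigma = (\Dv, \sigmaDS)$ where $\sigmaDS$ is an ordinary $\DS$ signature on the message $\Dv \| \msg$. The verification algorithm gates on membership in $\Dv$ and otherwise defers to $\SigVer$. This uniformity is what makes all three proofs nearly mechanical.

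For \textbf{correctness}, I would take $(\pp,\sk) \gets \Setup(1^\secpar)$, generate the keys, and let $\sigma = (\Dv, \sigmaDS) \neq \bot$ be an honestly produced signature, so $\sigmaDS \gets \SigSign(\sigk, \Dv\|\msg)$. If $\id \in \Dv$, then $\Vrfy$ runs $\SigVer(\verk, (\Dv\|\msg, \sigmaDS))$, which outputs $\top$ with overwhelming probability by the correctness of $\DS$ (Def.~\ref{def:DS_Correctness}). If $\id \in \VerS \setminus \Dv$, then $\id \notin \Dv$, so $\Vrfy$ returns $\bot$ deterministically, which is exactly $\false$; this case even holds with probability $1$. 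Both branches of Def.~\ref{def:CMDVS_Correctness} follow.

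For \textbf{unforgeability}, I would reduce to the $\UFCMA$ security of $\DS$. Given a CMDVS adversary $\A$ that wins $\Exp{\CMDVS,\A}{\UF}$, I build a $\DS$-adversary $\B$ that, on input $\verk$, sets $\pp \ceq \verk$ and simulates $\A$'s oracles: $\KGOra$ simply returns $\verk$ for every queried $\id$ (updating $\VerS$), and $\SOra$ on input $(\Dv, \msg, \len)$ forwards $\Dv\|\msg$ to $\B$'s own signing oracle to get $\sigmaDS$, returning $(\Dv, \sigmaDS)$ (subject to the length check). When $\A$ outputs $(\Dv^\star, \msg^\star, \sigma^\star = (\Dv^\star, \sigmaDS^\star))$ with $(\Dv^\star, \msg^\star) \notin \qM$ and some $\id^\star \in \Dv^\star$ for which $\Vrfy$ accepts, then $\SigVer(\verk, (\Dv^\star\|\msg^\star, \sigmaDS^\star)) = \top$. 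The key point is that $(\Dv^\star, \msg^\star) \notin \qM$ means $\B$ never queried $\Dv^\star\|\msg^\star$ to its signing oracle, so $\Dv^\star\|\msg^\star \notin \sM$ and $(\Dv^\star\|\msg^\star, \sigmaDS^\star)$ is a valid $\DS$ forgery, giving $\AdvUF{\CMDVS,\A}(\secpar) \le \AdvF{\DS,\B}(\secpar)$.

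For \textbf{consistency}, the argument is immediate and unconditional: since every verifier shares $\vrk{\id} = \verk$ and verifies the same predicate $\SigVer(\verk, (\Dv\|\msg, \sigmaDS))$, the value of $\Vrfy(\pp, \vrk{\id}, \msg^\star, \sigma^\star)$ for $\id \in \Dv^\star$ is \emph{identical} across all designated $\id$ (the gating check $\id \in \Dv^\star$ passes for every such $\id$ by definition). Hence it is impossible for one designated verifier to accept while another rejects, so $\AdvF{\CMDVS,\A}^{\Cons}$—more precisely $\Adv{\CMDVS,\A}{\Cons}(\secpar)$—is identically zero. The only mild subtlety, and the step I would be most careful about, is the reduction's handling of the oracle bookkeeping in the unforgeability proof: I must confirm that $\B$'s simulation of $\qM$ versus its own $\sM$ is faithful, i.e.\ that the length-failure case (where $\Sign$ returns $\bot$ and nothing is added to $\qM$) is consistent with $\B$ choosing not to record that query, so that the non-membership condition $(\Dv^\star,\msg^\star)\notin\qM$ translates correctly into the forgery freshness condition $\msg^\star \notin \sM$ over the concatenated message space.
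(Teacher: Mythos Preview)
The paper omits the proof entirely (``We omit the proof''), so your write-up is already more detailed than what appears there. Your correctness argument is fine. But your unforgeability and consistency arguments share a real gap: you write $\sigma^\star = (\Dv^\star, \sigmaDS^\star)$, tacitly assuming that the set embedded in the signature coincides with the set $\Dv^\star$ the adversary outputs alongside it. Nothing in either game enforces this; $\sigma^\star$ parses as $(\Dv',\sigmaDS')$ for some $\Dv'$ that may differ from $\Dv^\star$. Concretely, let $\A$ query $\SOra$ on $(\{1\},\msg,\len)$ to obtain $\sigma=(\{1\},\sigmaDS)$ and then output $(\Dv^\star,\msg^\star,\sigma^\star)=(\{1,2\},\msg,\sigma)$. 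Now $(\{1,2\},\msg)\notin\qM$, verifier $1\in\Dv^\star$ accepts (the gate $1\in\{1\}$ passes and $\SigVer$ accepts on $\{1\}\|\msg$), while verifier $2\in\Dv^\star$ rejects ($2\notin\{1\}$). This wins \emph{both} the unforgeability and the consistency games as written, and your reduction $\B$ fails because the message actually verified by $\SigVer$ is $\{1\}\|\msg$, which $\B$ \emph{did} query to its own oracle, so the extracted forgery is not fresh.

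This is less a flaw you introduced than a mismatch between the games in Figs.~\ref{fig:CMDVS_UF}--\ref{fig:CMDVS_Cons} and the construction; on a literal reading of those games the theorem appears to be false, and the paper simply asserts it without addressing this point. A repair is to have the reduction parse $\Dv'$ out of $\sigma^\star$ and submit $(\Dv'\|\msg^\star,\sigmaDS')$, but then freshness requires $(\Dv',\msg^\star)\notin\qM$, which does not follow from $(\Dv^\star,\msg^\star)\notin\qM$; one must either restrict the game so that $\Dv^\star$ is the set carried in $\sigma^\star$, or explicitly argue that the ``attack'' above is definitionally benign. By contrast, the length-check bookkeeping you flag as a ``mild subtlety'' is genuinely minor: for fixed-length $\DS$ signatures, $\B$ can simply skip its oracle call whenever $\len$ is visibly too small, so that $\sM$ and $\qM$ stay in sync.
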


Although the above construction is quite simple, the signature size $|\sigma|$ is $|\Dv|\cdot \log_2 |\ID|+|\sigmaDS|$.
Namely, the maximum signature length $\len$ must always satisfy $\len \ge |\Dv|\cdot \log_2 |\ID|+|\sigmaDS|$.

Out construction in the next section realizes smaller signature sizes;
in particular, it can flexibly specify $\len$ s.t.\ $\len = o(|\Dv|)$ with adjustment for other parameters (see Remark~\ref{rem:CMDVS_ParameterSetting} for details).

\subsection{Proposed Generic Construction} \label{sec:Construction_Generic}
We show a CMDVS scheme from an AMQ structure and DS scheme.
Compared to the trivial construction, we can succeed in drastically reducing the signature size by allowing a \emph{small} false-positive probability, which can be made negligible with appropriate parameter settings.

In the following, we suppose a function $\Assign: \NN \times \ID \to 2^{\U}$ over $\U$.
Roughly speaking, $\Assign$ is a function that uniquely assigns multiple elements in $\U$ to an arbitrary identity, and we assume that for any fixed $\ell\in\NN$ and for any $\id,\id'\in\ID$, it holds $\Assign(\ell,\id)\cap\Assign(\ell,\id')=\emptyset$.
Note that such a function can be realized in the following way:
suppose $\ID \ceq \bit{\gamma},\U \ceq \bit{\gamma+\lrfloor{\log_2 \ell}+1}$, and for any $\ell$ and any $\id \in \ID$, we define $\Assign(\ell,\id) \ceq \set{\beta_1\|\id, \AB \beta_2\|\id, \AB \ldots, \AB \beta_\ell\|\id}$, where $\beta_i$ is binary representation of $i \in [\ell]$.
Our CMDVS scheme from an AMQ structure $\AMQall$ over $\U \subset \bit{*}$ and a DS scheme $\DSall$ as follows.

\begin{enumerate}
    \renewcommand{\labelenumi}{--}
    %
    \item $\Setup(1^{\secpar})$:
    It arbitrarily chooses $\ell \in \NN$, and it returns $(\pp,\sk)$, where $\pp \ceq (\verk, \AB \ell)$ and $\sk \ceq (\sigk, \ell)$.
    \item $\KeyGen(\pp,\sk,\id)$:
    It returns $\vrk{\id} \ceq \Assign(\ell,\id)$.
    \item $\Sign(\sk,\Dv,\msg,\len)$:
    It derives an appropriate parameter $\prm$ from $\Dv$, $\msg$, and $\len$.
    If $\prm$ cannot be derived, it returns $\bot$.
    For every $\id_i \in \Dv$, let $\X_{i} = \set{x_{(i-1)\ell+1}, \AB \ldots, \AB x_{i\ell}} \AB \ceq \AB \Assign(\ell, \AB \id_i)$.%
    \footnote{Namely, $\bigcup_{i=1}^{|\Dv|} \X_i = \set{x_1, x_2, \ldots, x_{\ell |\Dv|}}$.}
    It runs $(\T_0,\aux)\gets \Gen(\U,\prm)$ and for every $i \in [\ell |\Dv|]$, it computes $\hT \ceq \T_{\ell |\Dv|}$ as follows:
   \begin{align*}
        \T_i \gets \Insert(\T_{i-1},x_{i}, \aux).
   \end{align*}
    It sets $\sigma \ceq (\hT, \AB \aux, \AB \sigmaDS)$, where $\sigmaDS \gets \SigSign(\sigk, \AB \msg\|\hT\|\aux)$.
    If $|\sigma|>\len$, it returns $\bot$; otherwise, it returns $\sigma$.
    \item $\Vrfy(\pp,\vrk{\id},\msg,\sigma)$:
    It runs $\SigVer(\verk,\lrparen{\msg\|\hT\|\aux,\sigmaDS})$. 
    If the output is $\bot$, it returns $\bot$ and terminates.
    For every $x \in \X_{\id}$, it returns $\bot$ and terminates if $\Lookup(\hT, \AB x, \AB \aux)$ outputs $\false$.
    It returns $\top$ (if all $\Lookup$ outputs are $\true$).
\end{enumerate}

\if0
\medskip
\noindent
$\Setup(1^{\secpar})$:
%
\begin{enumerate}
    \renewcommand{\labelenumi}{\arabic{enumi}).}
    \setlength{\itemsep}{0em}
    \setlength{\parskip}{0em}
    \item Run $(\sigk,\verk) \gets \SigGen(1^\secpar)$.
    %
%
    \item Arbitrarily choose $\ell \in \NN$.
    \item Output $(\pp,\sk)$, where $\sk \ceq (\sigk, \ell)$ and $\pp \ceq (\verk, \AB \ell)$.
\end{enumerate}

\noindent
$\KeyGen(\pp,\sk,\id)$:
%
\begin{enumerate}
    \renewcommand{\labelenumi}{\arabic{enumi}).}
    \setlength{\itemsep}{0em}
    \setlength{\parskip}{0em}
    \item Output $\vrk{\id} \ceq \Assign(\ell,\id)$.
\end{enumerate}

\noindent
$\Sign(\sk,\Dv = \set{\id_1,\ldots,\id_{|\Dv|}},\msg,\len)$:
%
\begin{enumerate}
    \renewcommand{\labelenumi}{\arabic{enumi}).}
    \setlength{\itemsep}{0em}
    \setlength{\parskip}{0em}
    \item Derive an appropriate parameter $\prm$ from $\Dv$, $\msg$, and $\len$.
    If $\prm$ cannot be derived, output $\bot$.
    \item For every $\id_i \in \Dv$, let $\X_{i} = \set{x_{(i-1)\ell+1}, \AB \ldots, \AB x_{i\ell}} \AB \ceq \AB \Assign(\ell, \AB \id_i)$.
    Namely, $\bigcup_{i=1}^{|\Dv|} \X_i = \set{x_1, x_2, \ldots, x_{\ell |\Dv|}}$.
    \item Run $(\T_0,\aux)\gets \Gen(\U,\prm)$ and for every $i \in [\ell |\Dv|]$, compute:
    \begin{align*}
        \T_i \gets \Insert(\T_{i-1},x_{i}, \aux).
    \end{align*}
    %
    %
    \item Set $\sigma \ceq (\hT, \AB \aux, \AB \sigmaDS)$, where $\hT \ceq \T_{\ell |\Dv|}$ and $\sigmaDS \gets \SigSign(\sigk, \AB \msg\|\hT\|\aux)$.
    \item If $|\sigma|>\len$, output $\bot$; otherwise, output $\sigma$.
\end{enumerate}

\noindent
$\Vrfy(\pp,\vrk{\id},\msg,\sigma)$:
%
\begin{enumerate}
    \renewcommand{\labelenumi}{\arabic{enumi}).}
    \setlength{\itemsep}{0em}
    \setlength{\parskip}{0em}
    %
    \item Run $\SigVer(\verk,\lrparen{\msg\|\hT\|\aux,\sigmaDS})$. If the output is $\bot$, output $\bot$ and terminate.
    \item For every $x \in \X_{\id}$, output $\bot$ and terminate if $\Lookup(\hT, \AB x, \AB \aux)$ outputs $\false$.
    \item Output $\top$ (if all $\Lookup$ outputs are $\true$).
\end{enumerate}
\fi

The above construction meets the desirable properties below.

\begin{theorem} \label{thm:CDMVS_Generic} {\rm
    If a DS scheme $\DS$ meets $\UFCMA$ security and an AMQ structure $\AMQ$ meets completeness and bounded false-positive probability such that it holds $\fp{\ell|\Dv|}=2^{-\bigO{\secpar}}$ for all possible $\ell \in \NN$ and $\Dv\subset\VerS$ in the above construction, the above CMDVS scheme $\CMDVS$ meets correctness, unforgeability, and consistency.
}\end{theorem}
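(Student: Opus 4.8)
The plan is to prove the three properties separately: correctness follows directly from the AMQ guarantees and DS correctness, while both unforgeability and consistency reduce to the $\UFCMA$ security of the underlying $\DS$ scheme, with the bounded false-positive probability absorbing the non-cryptographic gaps. The single piece of cryptographic hardness is carried by one DS-forgery step; everything else is combinatorial bookkeeping built on the disjointness assumption for $\Assign$ and on \cref{def:AMQ_Completeness,def:AMQ_FalsePositive}.

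For correctness, I would fix $\msg$, $\VerS$, $\Dv$, a key $\vrk{\id}=\Assign(\ell,\id)$, and an honestly generated $\sigma=(\hT,\aux,\sigmaDS)$. Since $\sigmaDS$ is a genuine signature on $\msg\|\hT\|\aux$, the internal call $\SigVer(\verk,(\msg\|\hT\|\aux,\sigmaDS))$ returns $\top$ with overwhelming probability by \cref{def:DS_Correctness}. If $\id\in\Dv$, every element of $\X_{\id}=\Assign(\ell,\id)$ was inserted into $\hT$ during $\Sign$, so by \cref{def:AMQ_Completeness} each $\Lookup(\hT,x,\aux)$ outputs $\true$ and $\Vrfy\to\top$. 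If $\id\in\VerS\setminus\Dv$, the disjointness of $\Assign$ guarantees that none of the $\ell$ elements of $\X_{\id}$ lies in the inserted set $\PS=\bigcup_{\id'\in\Dv}\Assign(\ell,\id')$, so each is a genuine non-member; \cref{def:AMQ_FalsePositive} then bounds $\Pr[\Vrfy\to\top]$ (the event that all $\ell$ lookups return $\true$) by $\fp{\ell|\Dv|}=2^{-\bigO{\secpar}}$, which is $\negl{\secpar}$. This matches both lines of \cref{def:CMDVS_Correctness}.

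For unforgeability I would build a DS forger $\B$ from any CMDVS forger $\A$. On input $\verk$, $\B$ picks $\ell$, sets $\pp=(\verk,\ell)$, and simulates $\A$'s oracles: $\KGOra$ is answered by returning $\Assign(\ell,\id)$ (no secret is needed), and each $\SOra$ query $(\Dv,\msg,\len)$ is answered by building $\hT,\aux$ honestly and calling $\B$'s own signing oracle $\SignOra{\msg\|\hT\|\aux}$ to obtain $\sigmaDS$, while maintaining $\qM$. When $\A$ outputs a winning $(\Dv^\star,\msg^\star,\sigma^\star)$ with $\sigma^\star=(\hT^\star,\aux^\star,\sigmaDS^\star)$, some accepting $\id^\star$ forces $\SigVer(\verk,(\msg^\star\|\hT^\star\|\aux^\star,\sigmaDS^\star))\to\top$, so $\B$ outputs $(\msg^\star\|\hT^\star\|\aux^\star,\sigmaDS^\star)$. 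This is a valid forgery whenever $\msg^\star\|\hT^\star\|\aux^\star\notin\sM$, i.e.\ the DS message is fresh. Consistency uses the very same reduction: if one designated verifier accepts and another rejects, the accepting one already forces $\SigVer\to\top$ on the same string, so a fresh DS message again yields a forgery. In both cases I would conclude $\Adv{\CMDVS,\A}{\UF}\le\Adv{\DS,\B}{\CMA}+\negl{\secpar}$ and $\Adv{\CMDVS,\A}{\Cons}\le\Adv{\DS,\B}{\CMA}+\negl{\secpar}$ once the non-fresh case is shown negligible.

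The hard part will be the non-fresh (replayed) case, where $\msg^\star\|\hT^\star\|\aux^\star$ coincides with some signing-query output $\msg_j\|\hT_j\|\aux_j$. Using a uniquely decodable concatenation I would deduce $\hT^\star=\hT_j$, so the adversary is reusing an honestly built structure rather than forging a DS signature. Here the argument must rely purely on the AMQ properties: $\hT_j$ is exactly the structure obtained by inserting $\Assign(\ell,\id)$ for every $\id\in\Dv_j$, so \cref{def:AMQ_Completeness} forces every genuinely inserted verifier to accept, and — via $\Assign$-disjointness — any verifier acceptance (for unforgeability) or any disagreement among designated verifiers (for consistency) beyond what this original query already implies would require an $\ell$-fold false positive, bounded by $\fp{\ell|\Dv|}=2^{-\bigO{\secpar}}$. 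A union bound over the polynomially many signing queries keeps the total loss negligible. This replay case is the crux precisely because the signature binds only to $(\hT,\aux)$ and not to the claimed verifier set $\Dv^\star$, so the reduction cannot lean on the DS binding there and must instead invoke the false-positive bound together with the $\qM$ bookkeeping to rule out any non-negligible advantage.
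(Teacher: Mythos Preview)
Your correctness argument and the overall shape of the unforgeability/consistency reduction match the paper's approach; the paper also reduces to $\UFCMA$ of $\DS$ and isolates a filter-collision event (its Lemma~\ref{lem:Proof_Unforgeability_Conflict}) before declaring the remaining reduction ``straightforward.'' You are more explicit than the paper in writing out the forger $\B$ and in naming the fresh/replay split.

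The gap is in the replay case, and it is not closed by the false-positive bound plus $\qM$ bookkeeping as you claim. Suppose $(\msg^\star,\hT^\star,\aux^\star)=(\msg_j,\hT_j,\aux_j)$ for some signing query on $(\Dv_j,\msg_j,\len_j)$. For unforgeability, the freshness condition $(\Dv^\star,\msg^\star)\notin\qM$ only forces $\Dv^\star\neq\Dv_j$; the adversary may choose $\Dv^\star$ to contain some $\id^\star\in\Dv_j$, and that $\id^\star$ accepts $\sigma_j$ by AMQ \emph{completeness}, with no false positive involved. For consistency (which carries no $\qM$ restriction at all), replaying $\sigma_j$ with $\Dv^\star=\Dv_j\cup\{\id'\}$ for any $\id'\in\VerS\setminus\Dv_j$ makes every member of $\Dv_j$ accept while $\id'$ rejects with overwhelming probability, again without triggering any false positive. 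In neither scenario does the adversary force a non-member lookup to return $\true$, so the bound $\fp{\ell|\Dv|}=2^{-\bigO{\secpar}}$ is never engaged. The paper's sketch shares this blind spot: Lemma~\ref{lem:Proof_Unforgeability_Conflict} concerns collisions between \emph{two} honestly generated filters, which is not the event here, and the consistency paragraph silently identifies the adversary's $\Dv^\star$ with the signer's $\Dv_j$. You correctly flagged the replay case as ``the crux''; the point is that, under the definitions as written, the ingredients you list do not dispose of it.
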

\begin{proof}
We prove the correctness, unforgeability, and consistency of $\CMDVS$ as follows.

\mpara{Correctness}
Roughly speaking, we can prove the correctness property of $\CMDVS$ from completeness and bounded false-positive probability of $\AMQ$ and correctness of $\DS$.

We fix an arbitrary subset $\VerS\subset\ID$, and let $\vrk{\id} \ceq \Assign(\ell,\id)$ for every $\id \in \VerS$.
We also fix an arbitrary subset $\Dv \subset \VerS$.
\footnote{Although $\Dv \subset \ID$ is considered in Def.~\ref{def:CMDVS_Correctness}, we here consider the case of $\Dv \subset \VerS$ without loss of generality.}
For any $\msg \in \M$ and any $\len \in \NN$, $\Sign$ outputs $\sigma \ceq \lrparen{\hT, \AB \aux, \AB \sigmaDS}$.
Due to the correctness property of $\DS$ (Def.~\ref{def:DS_Correctness}), it clearly holds that $\SigVer(\verk,\lrparen{\msg\|\hT\|\aux,\sigmaDS})=\top$.
We then consider the following two cases: $\id \in \Dv$ and $\id \in \VerS \setminus \Dv$.
\begin{description}
    \item[The case of $\id \in \Dv$.]
    Due to the completeness property of $\AMQ$ (Def.~\ref{def:AMQ_Completeness}), for any $\id \in \Dv$ and any $x\in\X_{\id}$, it is obvious that it holds $\Lookup(\hT,x,\aux) = \true$.
    Therefore, $\Vrfy(\pp,\vrk{\id},\msg,\sigma)$ always outputs $\top$.
    \item[The case of $\id \in \VerS\setminus\Dv$.]
    Due to bounded false-positive probability of $\AMQ$ (Def.~\ref{def:AMQ_FalsePositive}), for any $\id \in \VerS\setminus\Dv$ and any $x\in\X_{\id}$, it holds that $\Pr[\Lookup(\hT,x,\aux) = \true] \le \fp{\ell|\Dv|}$.
    We then have
    \begin{align*}
        &      \Pr\LRbra{\Vrfy(\pp,\vrk{\id},\msg,\sigma)=\top} \\
        &\qquad  \le \LRparen{\fp{\ell|\Dv|}}^{\ell} 
           =   2^{-\ell \cdot \bigO{\secpar}} 
           =   \negl{\secpar},
    \end{align*}
    where $\negl{\secpar}$ is a negligible function.
\end{description}

\mpara{Unforgeability}
Loosely speaking, $\UFCMA$ security of $\DS$ guarantees unforgeability unless $(\hT, \AB \aux)=(\hT', \AB \aux')$ occurs for distinct $\Dv,\Dv'\subset\VerS$, where $\sigma \ceq (\hT, \AB \aux, \AB \sigmaDS) \gets \Sign(\sk, \AB \Dv, \AB \msg, \AB \len)$ and $\sigma' = (\hT', \AB \aux', \AB \sigmaDS') \gets \Sign(\sk, \AB \Dv', \AB \msg, \AB \len)$.
If it occurs, $\sigma$, which is a valid signature for $(\msg,\Dv)$, is also a valid one for $(\msg,\Dv')$;
it breaks unforgeability.
The following lemma shows such a situation occurs with negligible probability.

\begin{lemma} \label{lem:Proof_Unforgeability_Conflict} {\rm
    Let $\CMDVS$ be a CMDVS scheme and $\AMQ$ be an AMQ structure with completeness and bounded false-positive probability such that it holds $\fp{\ell|\Dv|}=2^{-\bigO{\secpar}}$ for any $\ell\in\NN$ and $\Dv\subset\VerS$.
    Then, for any $\msg,\msg' \in \M$, any $\len, \len' \in \NN$, any $\VerS \subset \ID$, and any distinct $\Dv,\Dv' \subset \VerS$, it holds
    \begin{align*}
        \Pr\LRbra{(\hT, \AB \aux)=(\hT', \AB \aux')} \le \negl{\secpar},
    \end{align*}
    where $(\hT, \AB \aux, \AB \sigmaDS) \gets \Sign(\sk, \AB \Dv, \AB \msg, \AB \len)$ and $(\hT', \AB \aux', \AB \sigmaDS') \gets \Sign(\sk, \AB \Dv, \AB \msg', \AB \len')$.
}\end{lemma}
\begin{proof}[Proof of Lemma~\ref{lem:Proof_Unforgeability_Conflict}]
We assume that for some $\msg,\msg' \in \M$, some $\len, \len' \in \NN$, some $\VerS \subset \ID$, and some distinct $\Dv,\Dv' \subset \VerS$, it holds $(\hT, \AB \aux)=(\hT', \AB \aux')$ with non-negligible probability, where $(\hT, \AB \aux, \AB \sigmaDS) \gets \Sign(\sk, \AB \Dv, \AB \msg, \AB \len)$ and $(\hT', \AB \aux', \AB \sigmaDS') \gets \Sign(\sk, \AB \Dv, \AB \msg', \AB \len')$.
We show a contradiction.
Since $\Dv \neq \Dv'$, there exists $\id^\star \in \Dv'\setminus\Dv$ or $\id^\star \in \Dv\setminus\Dv'$.
Without loss of generality, suppose $\id^\star \in \Dv'\setminus\Dv$.
Let $\vrk{\id^\star} = \X_{\id^\star}$.
By the assumption and the completeness of $\AMQ$, for any $x \in \X_{\id}$, we have $\Lookup(\hT, x, \aux) = \Lookup(\hT', x, \aux') = \true$.
This means that for $\Dv$, a false positive occurs with non-negligible probability, which contradicts bounded false-positive probability for $\Dv$, which should be negligible, i.e., $\fp{\ell|\Dv|} \le \negl{\secpar}$.
\end{proof}

Thus, we can easily show that if there exists a PPTA $\A$ that breaks the unforgeability of $\CMDVS$, there exists a PPTA $\mathcal{F}$ that breaks $\UFCMA$ security of $\DS$.
We omit the proof since it is straightforward.

\mpara{Consistency}
It clearly follows from completeness and bounded false-positive probability of $\AMQ$ and $\UFCMA$ security of $\DS$.
Roughly speaking, Lemma~\ref{lem:Proof_Unforgeability_Conflict}, which requires bounded false-positive probability of $\AMQ$, guarantees that $(\hT,\aux)$ is a uniquely determined by a designated-verifier set $\Dv$.
Namely, there exists at most one $(\hT,\aux)$ per $\Dv$.
$\UFCMA$ security of $\DS$ guarantees that for any $\sigma = (\hT,\aux,\sigmaDS)$, $(\hT,\aux)$ is correctly generated by the signer as long as $\sigmaDS$ is valid.
Finally, completeness of $\AMQ$ guarantees that all designated verifiers $\Dv$ accept correctly-generated signatures $\sigma$.

\medskip
It completes the proof.
\end{proof}

\mpara{Instantiations}
The above construction can be instantiated with any AMQ structures and digital signatures.
After the seminal work of AMQ structures, i.e., the Bloom filter~\cite{comacmBlo70}, there are various (heuristically) efficient AMQ structure constructions such as the cuckoo filter~\cite{conextFAKM14} and the vacuum filter~\cite{vldbWZSQ19}.
In this paper, we will employ the Bloom filter and the vacuum filter as the underlying AMQ structures for theoretical performance analysis in Section~\ref{sec:Analysis} and implementations in Section~\ref{sec:experiments}, respectively.

\if0
\mpara{System Description}
Due to space limitation, we give a cncrete description of $\system$ with the above CMDVS construction in Appendix~\ref{sec:Description}.

\begin{enumerate}
    \renewcommand{\labelenumi}{\ctext{\arabic{enumi}}}
    \item $\Owner$ sends $\SysM$ a request to have an arbitrary subset (i.e., $\IDdsg$) of all devices execute a command $\cmd \in \M$ via the Internet.
    \item $\SysM$ generates an authenticated command $\acmd$, which is an authenticated version of $\cmd$ and contains the information on the designated devices $\IDdsg$, and broadcasts it to \emph{all} modules via the wireless network.
    \item All modules $\set{\Mdl{\id}}_{\id\in\IDact}$ (including non-designated ones) receive $\acmd$ and check its validity. If $\acmd$ is \emph{not} valid, the modules bring the process to a halt.
    \item If each module $\Mdl{\id}$ confirms that the valid authenticated command $\acmd$ is directed at the corresponding IoT device $\IoT{\id}$, $\Mdl{\id}$ executes $\cmd$ for $\IoT{\id}$ in an appropriate way%
    \footnote{For convenience, we use ``a module $\Mdl{\id}$ executes a command for the corresponding IoT device $\IoT{\id}$'' and ``an IoT device $\IoT{\id}$ executes a command'' interchangeably.}
    (e.g., $\Mdl{\id}$ stops the power to $\IoT{\id}$ in order to bring it to a stop). Otherwise, $\Mdl{\id}$ does nothing and terminates the process.
    \item $\SysM$ updates the identifier set $\IDact$ of the current activated devices as follows: $\SysM$ identifies IoT devices that have been designated but (accidental or intentional) did not execute $\cmd$. (We will discuss how to realize this later.) $\SysM$ also sends $\Owner$ feedback about the current $\IDact$.
\end{enumerate}
\fi

\subsection{System Description} \label{sec:Description}
We give a concrete description of $\system$ with CMDVS $\CMDVSall$.
We consider a message space $\M$ of $\CMDVS$ as a command space for $\system$.

\spara{System Setup}
$\SysM$ runs $\Setup$ with an appropriate security parameter $\secpar$ to get a public parameter $\pp$ and a signing key $\sk$.
$\SysM$ updates a identifier set (or a list) $\IDact$ of activated IoT devices.

\spara{Embedding Keys}
For any device $\IoT{\id}$, $\SysM$ runs $\KeyGen(\pp,\sk,\id)$ and obtains $\vrk{\id}$.
$\SysM$ then embeds or sends $(\pp,\vrk{\id})$ into the device $\IoT{\id}$.


\spara{Sending Requests}
$\Owner$ sends $\SysM$ a request to have an arbitrary subset $\set{\IoT{\id}}_{\id \in \IDdsg}$ of activated IoT devices execute a command $\cmd \in \M$.
Namely, the request includes $\IDdsg$ and $\cmd$.
Note that they can securely communicate with each other using the SSL/TLS.

\spara{Broadcast}
$\SysM$ runs $\Sign(\sk,\IDdsg,\cmd,\len)$ to obtain $\sigma$, where $\len$ may be set depending on environment, i.e., it might be set at the beginning of the system or every broadcast, etc.
$\SysM$ then broadcasts an authenticated command $\acmd \ceq (\cmd,\sigma)$ to all devices.

\spara{Command Verification}
Suppose every $\IoT{\id}$ receives an authenticated command $\acmd'$ and parse $\acmd' = (\cmd',\sigma')$.
It then runs $\Vrfy(\pp,\vrk{\id},\cmd',\sigma')$.
If it outputs $\top$, then $\IoT{\id}$ confirms that $\id$ was designated and $\cmd'$ is a valid one, and executes $\cmd'$.
Otherwise, $\IoT{\id}$ does nothing.


\medskip
It is obvious that the above system meets completeness and integrity from correctness, unforgeability, and consistency of the underlying CMDVS $\CMDVS$.
Thanks to the underlying AMQ structures, our CMDVS construction can achieve constant-size signatures by appropriately setting parameters, and it also provides efficient verification since it only requires a single signature verification and $\ell$ lookup operations.
Note that lookup operations are basically lightweight;
for example, Bloom filter's lookup operation is constructed with only (non-cryptographic) hash functions.
Hence, the above system clearly meets scalability and light weight.

\section{Concrete Instantiation} \label{sec:InstantiationAndPerformance}
To clarify the effectiveness of our proposed construction, we instantiate the underlying AMQ structure and show an instantiation of our construction from (an improved variant of) the Bloom filter and any digital signatures.

\subsection{An Improved Variant of Bloom Filter} \label{sec:Bloom}
We describe the Bloom filter employed in our instantiation.
Roughly speaking, we employ Kirsch and Mitzenmacher's technique~\cite{esaKM06} to simplify the traditional construction~\cite{comacmBlo70} of the Bloom filter.
Their technique reduces the number of hash functions used in the Bloom filter construction, and effectively implements the Bloom filter without any increase in the asymptotic false-positive probability.

\mpara{Parameters}
In the Bloom filter, $\prm$, which is input of $\Gen$, consists of the following four parameters.
\begin{itemize}
    \setlength{\itemsep}{0em}
    \setlength{\parskip}{0em}
    \item $m$: the size of data structure $\T$. Namely, we have $|\T| = m$.
    \item $n$: the (maximum) number of elements inserted to $\T$.
    \item $\mu$: an upper bound of false-positive probability. Namely, it holds $\fp{|\PS|} \le \mu$ for any $\PS\subset\U$.
    \item $k$: the number of hash functions used in Bloom filter.
\end{itemize}
We will discuss how to determine $m$, $n$, $\mu$, and $k$ later.

\mpara{Construction}
We employ (non-cryptographic) hash functions such as FNV-1a\footnote{\url{http://www.isthe.com/chongo/tech/comp/fnv/}} and Murmur2.\footnote{\url{https://github.com/aappleby/smhasher/blob/master/src/MurmurHash2.cpp}}
(A variant of) the Bloom filter $\Bloom = \lrparen{\Gen, \AB \Insert, \AB \Lookup}$ is constructed as follows.

\begin{enumerate}
    \renewcommand{\labelenumi}{--}
    %
    \item $\Gen(\U,\prm) \to (\T,\aux)$:
    For every $i\in\bin$, it randomly chooses hash functions $\hash_i: \U \to [m]$.
    It returns $(\T,\aux)$, where $\T \ceq 0^m$ and $\aux \ceq (k,\hash_0,\hash_1)$.
    \item $\Insert(\T,x,\aux) \to \T'$:
    For every $i\in[k]$, it computes $\Hash_i(x) \ceq \hash_0(x) + i \cdot \hash_1(x) \bmod{m}$ and $\T[\Hash_i(x) + 1] \ceq 1$.%
    \footnote{Since $\Hash_i(x) \in [0,m-1]$, we need to set $\T[\Hash_i(x) + 1] \ceq 1$, not $\T[\Hash_i(x)] \ceq 1$.}
    It returns $\T' \ceq \T$.
    \item $\Lookup(\T,x,\aux) \to \true / \false$:
    For every $i\in[k]$, it returns $\false$ and terminates the process if $\T[\Hash_i(x)+1] = 0$, where $\Hash_i(x) \ceq \hash_0(x) + i \cdot \hash_1(x) \bmod{m}$.
    It returns $\accept$ (if $\T[\Hash_i(x)+1] = 1$ for all $i\in[k]$).
\end{enumerate}

Goel and Gupta~\cite{sigmetricsGG10} showed the following lemma.
\begin{lemma}[\cite{sigmetricsGG10}] \label{lem:Bloom_Relation} {\rm
    Let $\Bloom = \lrparen{\Gen, \AB \Insert, \AB \Lookup}$ be the Bloom filter.
    For any $(m,n,k) \in [|\U|]^2 \times\NN$ and any $q \in [|\U|-n]$, let
    \begin{align}
        \mu \ceq \LRparen{1-e^{-\frac{\LRparen{n+(q/2)}k}{m-1}}}^{kq}, \label{eq:mu}
    \end{align}
    and $\prm \ceq (m,n,\mu,k)$.
    Then, for any $(\T_0,\aux) \gets \Gen(\U,\prm)$ and any $\PS = \set{x_1, \AB \ldots, \AB x_{n}} \subset \U$ such that $|\PS|=n$, we define $\hT \ceq \T_{n}$, where $\T_i \gets \Insert(\T_{i-1},x_i,\aux) \text{ for } i \in [n]$.
    We say that the false positive occurs if for any $\Q \subset \U\setminus\PS$ such that $|\Q|=q$, it holds $\Lookup(\hT,x,\aux)=\accept$ for all $x\in\Q$. Then, the false-positive probability $p$ satisfies $p\le\mu$.
}\end{lemma}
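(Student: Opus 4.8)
The plan is to work in the standard idealized model in which the base hashes $\hash_0,\hash_1$, and hence every derived index $\Hash_i(x)=\hash_0(x)+i\cdot\hash_1(x)\bmod m$, behave as independent uniform draws from $[m]$; Kirsch and Mitzenmacher's analysis~\cite{esaKM06} justifies that this double-hashing scheme does not change the asymptotic false-positive rate, so it suffices to bound $p$ in this model. First I would recast everything as balls into bins: inserting the $n$ elements of $\PS$ throws $kn$ balls (the indices $\Hash_i(x_j)$) independently and uniformly into $m$ bins. Let $O\subseteq[m]$ be the set of occupied bins after these throws and $R\ceq|O|$. By construction of $\Lookup$, an element $x\in\U\setminus\PS$ gives $\Lookup(\hT,x,\aux)=\accept$ exactly when all $k$ of its fresh uniform indices land in $O$, so the fixed query set $\Q$ of size $q$ is a simultaneous false positive iff all $kq$ query indices land in $O$. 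Since these indices are independent of $O$, conditioning on $O$ yields $\Pr[\text{all hit }O\mid O]=(R/m)^{kq}$, and therefore $p=\mathbb{E}[(R/m)^{kq}]$.

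The core of the argument is to bound this $kq$-th moment, and here the extra $q/2$ in the exponent of $\mu$ becomes transparent: by Jensen, $(\mathbb{E}[R]/m)^{kq}=\bigl(1-(1-1/m)^{kn}\bigr)^{kq}$ is only a \emph{lower} bound for $\mathbb{E}[(R/m)^{kq}]$, so the $q/2$ slack is exactly what must pay for the gap between the moment and the power of the mean. I would expose the $kq$ query balls one at a time and write $p=\prod_{s=1}^{kq}q_s$, where $q_s$ is the probability that the $s$-th query ball hits an already-occupied bin given that the previous ones did. The key observation is that, on the event that the first $s-1$ query balls are false positives, they lie in $O$ and add no new occupied bins, so the $s$-th ball still sees the set $O$ and $q_s=\mathbb{E}[R\mid\cdot]/m=\mathbb{E}[R^{s}]/\bigl(m\,\mathbb{E}[R^{s-1}]\bigr)$. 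I would bound this by $1-(1-1/m)^{kn+s-1}$, the mean occupancy after $kn+s-1$ throws, i.e. establish the moment-ratio inequality $\mathbb{E}[R^{s}]/\mathbb{E}[R^{s-1}]\le m\bigl(1-(1-1/m)^{kn+s-1}\bigr)$, which then gives $p\le\prod_{s=1}^{kq}\bigl(1-(1-1/m)^{kn+s-1}\bigr)$. \textbf{This moment-ratio step is the main obstacle:} it asserts that conditioning on ``all earlier query balls hit $O$'' inflates the occupancy mean by no more than actually inserting $s-1$ extra balls would. For $s=2$ it reduces to the variance bound $\mathrm{Var}(R)\le m(1-1/m)^{kn}\bigl(1-(1-1/m)^{kn}\bigr)$, which follows from the negative correlation of the bin-emptiness indicators; for general $s$ I would drive the same negative-association property of the occupancy process through an induction on $s$.

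Finally I would collapse the product into the closed form $\mu$. Using $1-1/m\ge e^{-1/(m-1)}$, each factor obeys $1-(1-1/m)^{kn+s-1}\le 1-e^{-(kn+s-1)/(m-1)}$. The function $g(x)\ceq\ln\bigl(1-e^{-x/(m-1)}\bigr)$ is increasing and concave (its second derivative is negative), so Jensen over the $kq$ points $x=kn+s-1$ gives $\frac1{kq}\sum_{s=1}^{kq}g(kn+s-1)\le g\bigl(kn+\tfrac{kq-1}{2}\bigr)\le g\bigl(k(n+q/2)\bigr)$, the last inequality because $g$ is increasing and $kn+\tfrac{kq-1}{2}\le k(n+q/2)$. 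Exponentiating, $\prod_{s=1}^{kq}\bigl(1-e^{-(kn+s-1)/(m-1)}\bigr)\le\bigl(1-e^{-(n+q/2)k/(m-1)}\bigr)^{kq}=\mu$, which is precisely $p\le\mu$. The two Jensen steps—one diagnosing why the $q/2$ slack is needed, one spending it—are routine; the only genuine work is the occupancy moment-ratio (negative-association) inequality flagged above.
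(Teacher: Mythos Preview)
The paper does not prove this lemma; it is quoted from Goel and Gupta~\cite{sigmetricsGG10} and used as a black box, so there is no in-paper argument to compare your sketch against.

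On its own merits your outline is sound up to the step you yourself flag. The reduction to $p=\mathbb{E}\bigl[(R/m)^{kq}\bigr]$, the telescoping $p=\prod_{s=1}^{kq}\mathbb{E}[R^{s}]/\bigl(m\,\mathbb{E}[R^{s-1}]\bigr)$, and the final concavity step---using $1-1/m\ge e^{-1/(m-1)}$ and concavity of $x\mapsto\ln(1-e^{-x/(m-1)})$ to collapse $\prod_{s}\bigl(1-(1-1/m)^{kn+s-1}\bigr)$ into $\mu$---are all correct. The moment-ratio inequality $\mathbb{E}[R^{s}]/\mathbb{E}[R^{s-1}]\le m\bigl(1-(1-1/m)^{kn+s-1}\bigr)$ is indeed the whole difficulty, but ``drive negative association through an induction on $s$'' is not yet a proof. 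Negative association of the occupancy indicators $I_j$ gives $\mathbb{E}\bigl[\prod_{j\in S}I_j\bigr]\le(1-(1-1/m)^{kn})^{|S|}$ for \emph{distinct} bins $S$; summing over all $(j_1,\ldots,j_s)\in[m]^s$ then bounds $\mathbb{E}[R^s]$ by a Stirling-number polynomial in $(1-(1-1/m)^{kn})$, and showing that this is at most $\prod_{j=0}^{s-1}m\bigl(1-(1-1/m)^{kn+j}\bigr)$ is a separate combinatorial inequality that does not drop out of NA alone. A cleaner phrasing is the coupling you already set up: on $A_{s-1}$ the $s-1$ query balls add no new bins, so the claim becomes $\mathbb{E}\bigl[R_{kn+s-1}\mid R_{kn+s-1}=R_{kn}\bigr]\le\mathbb{E}[R_{kn+s-1}]$, i.e.\ conditioning on ``the last $s-1$ throws opened no new bins'' does not raise the expected occupancy. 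This is true and is essentially what Goel--Gupta establish, but it requires its own argument; you should either spell it out or, as the paper does, simply cite their result.

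One side remark: you model all $kn+kq$ indices as independent uniform, whereas the paper's filter uses Kirsch--Mitzenmacher double hashing $\Hash_i(x)=\hash_0(x)+i\,\hash_1(x)\bmod m$, under which the $k$ indices of a single element are not independent. Kirsch--Mitzenmacher's guarantee is only about the \emph{asymptotic} false-positive rate, while $\mu$ here is a concrete finite-$m$ bound; strictly, Goel--Gupta's inequality is for the fully independent model, and it is the paper (not you) that silently transfers it to the double-hashing variant.
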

Note that the above lemma includes $\fp{n}$ in Def.~\ref{def:AMQ_FalsePositive} as a special case when we set $q=1$.
\begin{corollary}[\cite{sigmetricsGG10}] \label{cor:Bloom_Relation} {\rm
    Let $\Bloom = \lrparen{\Gen, \AB \Insert, \AB \Lookup}$ be the Bloom filter.
    For any $(m,n,k) \in [|\U|]^2 \times\NN$, we set
     \begin{align*}
        \mu \ceq \LRparen{1-e^{-\frac{\LRparen{n+(1/2)}k}{m-1}}}^{k}.
    \end{align*}  
    Suppose that $\hT$ is generated as in Lemma~\ref{lem:Bloom_Relation}.
    Then, $\fp{n}$ defined in Def.~\ref{def:AMQ_FalsePositive} satisfies $\fp{n} \le \mu$.
}\end{corollary}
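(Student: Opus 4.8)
The plan is to obtain Corollary~\ref{cor:Bloom_Relation} as the immediate specialization of Lemma~\ref{lem:Bloom_Relation} to the single-query case $q=1$, exactly as the remark preceding the statement anticipates.

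First I would check that $q=1$ is an admissible choice in the lemma. The lemma quantifies over $q \in [|\U|-n]$, so $q=1$ is allowed precisely when $n < |\U|$; this is the only interesting regime, since if $n=|\U|$ then $\U\setminus\PS=\emptyset$ and $\fp{n}$ in Def.~\ref{def:AMQ_FalsePositive} is vacuous. Substituting $q=1$ into the lemma's bound gives
\[
\LRparen{1-e^{-\frac{\LRparen{n+(q/2)}k}{m-1}}}^{kq}\Bigg|_{q=1} = \LRparen{1-e^{-\frac{\LRparen{n+(1/2)}k}{m-1}}}^{k},
\]
which is precisely the quantity $\mu$ defined in the corollary.

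Next I would match the events. For $q=1$, a query set $\Q\subset\U\setminus\PS$ with $|\Q|=1$ is a singleton $\set{x}$, and the lemma's ``false positive occurs'' event---that $\Lookup(\hT,x,\aux)=\accept$ holds for every element of $\Q$---collapses to the single condition $\Lookup(\hT,x,\aux)=\accept$ for one $x\in\U\setminus\PS$. Identifying the positive lookup output $\accept$ with $\true$, this is exactly the event whose probability appears in Def.~\ref{def:AMQ_FalsePositive}. Since the lemma's bound $\mu$ depends only on $|\Q|=q$ and not on the particular set $\Q$, the resulting inequality $\Pr[\Lookup(\hT,x,\aux)=\true]\le\mu$ is uniform over all $x\in\U\setminus\PS$, which is precisely the statement $\fp{n}\le\mu$.

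The proof is therefore essentially a substitution, and I do not expect a genuine obstacle. The only points deserving a sentence of care are \two verifying that the per-singleton bound of the lemma is uniform over $x$, so that it yields the single number $\fp{n}$ demanded by Def.~\ref{def:AMQ_FalsePositive}, and \three reconciling the minor notational mismatch between the $\accept$ output used in the concrete $\Lookup$ of $\Bloom$ and the $\true$ output used in the generic AMQ syntax; both denote the same ``membership accepted'' result.
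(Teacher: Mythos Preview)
Your proposal is correct and matches the paper's approach: the paper presents the corollary immediately after Lemma~\ref{lem:Bloom_Relation} with the remark that the lemma includes $\fp{n}$ as the special case $q=1$, and gives no further argument. Your additional care about the admissible range of $q$, the uniformity over $x$, and the $\accept/\true$ identification is sound but more than the paper itself spells out.
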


\subsection{Instantiation from the Bloom Filter and Any Digital Signatures} \label{sec:Instantiation}

We show a concrete instantiation of the proposed CMDVS construction $\CMDVSall$ in Section~\ref{sec:Construction_Generic} with the parameter-tuned Bloom filter.
The most crucial part is parameter adjustment for the Bloom filter.

\mpara{Parameter setting}
Based on Lemma~\ref{lem:Bloom_Relation}, we can flexibly set the Bloom filter parameters for our CMDVS construction as follows.

    %
As can be seen in the proposed construction, $\Setup$ determines $\ell$ at Step~2.
In this instantiation, for the security parameter $\secpar \in \NN$, $\Setup$ can choose arbitrary $\ell \in [\secpar]$.

$\Sign$ derives $\prm = (m,n,\mu,k)$ from $\Dv$, $\msg$, and $\len$, at Step~1.
In this instantiation, $\prm$ is determined as follows.
By setting $n \ceq \ell |\Dv|$ and $q \ceq \ell$ in Lemma~\ref{lem:Bloom_Relation}, Eq.~\eqref{eq:mu} can be written as:
\begin{align}
    \mu \ceq \LRparen{1-e^{-\frac{\LRparen{|\Dv|+(1/2)}k\ell}{m-1}}}^{k\ell}. \label{eq:CMDVS_FP_UpperBound}
\end{align}
Now, we would like to set $k$ and $m$ so that they satisfy
\begin{align}
    \mu \le \frac{1}{2^{c\secpar}}, \label{eq:CMDVS_secpar}
\end{align}
for arbitrary constant $c \in \RR$, since the proposed construction requires the AMQ structure $\AMQ$ with the negligible false-positive probability (see Theorem~\ref{thm:CDMVS_Generic}).
To achieve that aim, let $K \ceq k \ell$ for convenience and we set $k$ such that 
\begin{align}
    k \ell   =    \LRfloor{\frac{(m-1)\ln 2}{|\Dv|+1/2}}. \label{eq:CMDVS_K}
\end{align}    
Then, from Eq.~\eqref{eq:CMDVS_FP_UpperBound}, we have
\begin{align}
    \!\!\!\!
    \mu &   =   \LRparen{1-e^{-\frac{\LRparen{|\Dv|+(1/2)}K}{m-1}}}^{K} \!\!
           \le \LRparen{1-e^{-\frac{\LRparen{|\Dv|+(1/2)}\frac{(m-1)\ln 2}{|\Dv|+1/2}}{m-1}}}^{K} \!\!
           \le \frac{1}{2^K}.   \label{eq:CMDVS_mu_K}
\end{align}
From Eqs.~\eqref{eq:CMDVS_secpar}~and~\eqref{eq:CMDVS_mu_K}, $\mu$ has to satisfy $\mu \le 1/2^K \le 1/2^{c \secpar}$.
Therefore, we have to set $(k,\ell)$ so that it satisfies
\begin{align}
    k \ell = K \ge c \secpar. \label{eq:CMDVS_K_relation}
\end{align}
Since $\ell$ was already chosen by $\Setup$, $k \ceq \LRceil{c \secpar/\ell}$ satisfies Eq.~\eqref{eq:CMDVS_K_relation}.
Now we are ready to choose $m$.
From Eqs.~\eqref{eq:CMDVS_K}--\eqref{eq:CMDVS_K_relation}, we have
\begin{align}
    \frac{(m-1)\ln 2}{|\Dv|+1/2} \ge k \ell \ge c \secpar. \label{eq:CMDVS_K_secpar}
\end{align}    
Namely, $m$ has to satisfy the following inequality to meet Eq.~\eqref{eq:CMDVS_K_secpar}:
\begin{align*}
    m   \ge \frac{(|\Dv|+1/2)k\ell}{\ln 2}+1.  
\end{align*}
Therefore, the following $m$ is sufficient:
\begin{align*}
    m   \ceq \LRceil{\frac{(|\Dv|+1/2)k\ell}{\ln 2}}+1.
\end{align*}
Note that the above parameters $(c,k,\mu,m,n)$ can be adaptively set every time $\Sign$ is executed.

\begin{remark}[Towards CMDVS with Constant-Size Signatures] \label{rem:CMDVS_ParameterSetting} {\rm
Although the above parameter setting works for any $n = \ell |\Dv|$, i.e., for any $\Dv \subset \VerS$, the Bloom filter also allows us to fix the size of data structures $m$ first and then determine concrete $n$, i.e., an upper bound of the size of $\Dv$.
Namely, the Bloom filter can also provide a concrete CMDVS construction with the constant-size signatures regardless of the size of $\Dv$, though $\Dv$ has to satisfy $|\Dv| \le n$, where $n$ is fixed throughout the protocol and determined according to the constant $m$.
From Eq.~\eqref{eq:CMDVS_K_secpar}, for any constant $m \in \NN$, $|\Dv|$ has to satisfy the following inequality:
        \begin{align*}
            |\Dv|   \le \frac{(m-1)\ln 2}{k \ell}-\frac{1}{2}.
        \end{align*}
        Hence, $n$ has to satisfy the following:
        \begin{align*}
            n = \ell |\Dv|   
            &   \le \ell \LRparen{ \frac{(m-1)\ln 2}{k \ell}-\frac{1}{2}} 
               = \frac{(m-1)\ln 2}{k}-\frac{\ell}{2}.
        \end{align*}
        Thus, $n$ is defined as follows.
        \begin{align*}
            n   &   \ceq    \LRfloor{\frac{(m-1)\ln 2}{k}-\frac{\ell}{2}}.
        \end{align*}
        Note that $\Sign$ outputs $\bot$ when $|\Dv| > n$.
}\end{remark}

\mpara{Instantiation}
With the above parameters, we can instantiate the proposed construction in Section~\ref{sec:Construction_Generic} by the Bloom filter $\Bloomall$ over $\U \subset \bit{*}$.
To be precise, this concrete construction is a slightly-modified but more efficient version of an instantiation of $\CMDVS$ from the Bloom filter. Note that the modification does not affect the security proofs in Theorem~\ref{thm:CDMVS_Generic}. 
We add footnotes on the differences between the simple instantiation and ours in this section at where we make the modifications.

\begin{enumerate}
    \renewcommand{\labelenumi}{--}
    %
    \item $\Setup(1^{\secpar})$:
    Run $(\sigk,\verk) \gets \SigGen(1^\secpar)$.
    It arbitrarily chooses $\ell \in [\secpar]$ and randomly choose $\hash_i: \U \to [2^\secpar]$ for every $i\in\bin$, and it returns $(\pp,\sk)$, where $\sk \ceq (\sigk, \AB \ell, \AB \hash_0, \AB \hash_1)$, $\pp \ceq (\verk, \AB \ell)$.
    \item $\KeyGen(\pp,\sk,\id)$:
    For every $i\in[\ell]$, it computes $h_{\id,0}^{(i)} \ceq \hash_0(x_{\id}^{(i)})$ and $h_{\id,1}^{(i)} \ceq \hash_1(x_{\id}^{(i)})$, where $\X_{\id} =\set{x_{\id}^{(1)},\ldots,x_{\id}^{(\ell)}} \ceq \Assign_{\ell}(\id)$. It returns $\vrk{\id} \ceq \set{\lrparen{h_{\id,0}^{(i)},h_{\id,1}^{(i)}}}_{i\in[\ell]}$.
    \item $\Sign(\sk,\Dv,\msg,\len)$:
    It derives $(c,k,\mu,m,n)$ as above.
    If $\ell |\Dv| > n$ holds, it returns $\bot$.
    For every $\id_i \in \Dv$, let $\X_{i} = \set{x_{(i-1)\ell+1}, \AB \ldots, \AB x_{i\ell}} \AB \ceq \AB \Assign(\ell, \AB \id_i)$.%
    It initializes $\T_0$ as $\T_0 \ceq 0^m$ and computes $\hT \ceq \T_{\ell|\Dv|}$ as follows:
    \begin{align*}
        \T_i \gets \Insert(\T_{i-1},x_{i}, (\ell |\Dv|,k,\hash_0,\hash_1)) \text{ for } i \in [\ell |\Dv|].
    \end{align*}
    Namely, for every $i \in [\ell]$ and every $j\in[k]$, it computes $H_j^{(i)} \ceq h_{\id,0}^{(i)}+j \cdot h_{\id,1}^{(i)} \bmod{m}$ and sets $\hT[H_j^{(i)}+1] \ceq 1$.
    It sets $\sigma \ceq (\hT, \AB \aux, \AB \sigmaDS)$, where $\aux = k$ and $\sigmaDS \gets \SigSign(\sigk,\msg\|\hT\|\aux)$.%
    \footnote{$\aux$ only consists of $k$ since hash functions $\hash_0,\hash_1$ were already chosen by $\Setup$.}
    If $|\sigma|>\len$, it returns $\bot$; otherwise, it returns $\sigma$.
    \item $\Vrfy(\pp,\vrk{\id},\msg,\sigma)$:
    It runs $\SigVer(\verk,\lrparen{\msg\|\hT\|\aux,\sigmaDS})$. 
    If the output is $\bot$, it returns $\bot$ and terminates.
    Let $m \ceq |\hT|$. For every $i \in [\ell]$ and every $j\in[k]$, it returns $\bot$ and terminates if $\hT[H_j^{(i)}+1] = 0$, where $H_j^{(i)} \ceq h_{\id,0}^{(i)}+j \cdot h_{\id,1}^{(i)} \bmod{m}$.%
    \footnote{This step corresponds to $k \ell$ executions of $\Lookup$, though AMQ elements $x_{\id}^{(1)},\ldots,x_{\id}^{(\ell)}$ were already embedded into hash functions when generating $\vrk{\id}$.}
    It returns $\top$ (if $\hT[H_j^{(i)}+1] = 1$ for all $i \in [\ell]$ and all $j\in[k]$).
\end{enumerate}

\begin{table}[t]
\caption{
Efficiency comparison between our instantiation and the trivial construction.
}
\label{tab:EfficiencyComparison}
%
\vspace*{-1em}
\begin{center}
\begin{tabular}{ccccc}
\hline
\multirow{2}{*}{\centering  $|\Dv|$}	&   \multicolumn{3}{c}{Instantiated (\S \ref{sec:Instantiation})}	&   Trivial (\S \ref{sec:Construction_Trivial})\\
\cmidrule(lr){2-4}
\cmidrule(lr){5-5}
&	$\mu=2^{-c\kappa}$	&	$k$    &	$|\sigma|$ (bits)	&	$|\sigma|$ (bits)	\\[1pt]
\hline\hline
\multirow{3}{*}{\centering  $100$}  &	$2^{-10}$	&	$\LRceil{10/\ell}$	&	$1{,}967$	&	\multirow{3}{*}{\centering  $6{,}912$}	\\
                                    &	$2^{-15}$	&	$\LRceil{15/\ell}$	&	$2{,}692$	&	 	\\
                                    &	$2^{-20}$	&	$\LRceil{20/\ell}$	&	$3{,}418$	&		\\
\hline
\multirow{3}{*}{\centering  $1{,}000$}  &	$2^{-10}$	&	$\LRceil{10/\ell}$	&	$14{,}952$	&	\multirow{3}{*}{\centering  $64{,}512$}	\\
                                        &	$2^{-15}$	&	$\LRceil{15/\ell}$	&	$22{,}169$	&	 	\\
                                        &	$2^{-20}$	&	$\LRceil{20/\ell}$	&	$29{,}387$	&		\\
\hline
\multirow{3}{*}{\centering  $10{,}000$} &	$2^{-10}$	&	$\LRceil{10/\ell}$	&	$144{,}794$	&	\multirow{3}{*}{\centering  $640{,}512$}	\\ 
                                        &	$2^{-15}$	&	$\LRceil{15/\ell}$	&	$216{,}933$	&	 	\\
                                        &	$2^{-20}$	&	$\LRceil{20/\ell}$	&	$289{,}052$	&		\\
\hline
\multirow{3}{*}{\centering  $100{,}000$}    &	$2^{-10}$	&	$\LRceil{10/\ell}$	&	$1{,}443{,}220$	&	\multirow{3}{*}{\centering  $6{,}400{,}512$}	\\
                                            &	$2^{-15}$	&	$\LRceil{15/\ell}$	&	$2{,}164{,}571$	&	 	\\
                                            &	$2^{-20}$	&	$\LRceil{20/\ell}$	&	$2{,}885{,}923$	&		\\
\hline
\multirow{3}{*}{\centering  $1{,}000{,}000$}    &	$2^{-10}$	&	$\LRceil{10/\ell}$	&	$14{,}427{,}475$	&	\multirow{3}{*}{\centering  $64{,}000{,}512$}	\\
                                                &	$2^{-15}$	&	$\LRceil{15/\ell}$	&	$21{,}640{,}954$	&	 	\\
                                                &	$2^{-20}$	&	$\LRceil{20/\ell}$	&	$28{,}854{,}434$	&		\\
\hline
\end{tabular}
\end{center}
\end{table}

\subsection{Theoretical Analysis} \label{sec:Analysis}
We give an efficiency comparison between the concrete instantiation in the previous section and the trivial construction in Section~\ref{sec:Construction_Trivial}.

We set (the upper bound of) the false-positive probability $\mu$ in the range of $2^{-10}$ to $2^{-20}$, which is comparable to the false-positive probability of our experimental results in the next section.
Note that the false-positive probability is related to correctness;
each non-designated verifier rejects correctly-generated signatures with probability $1-\mu$ in our instantiation.

There is a trade-off between $\ell$ and the number of hash functions $k$.
Therefore, small $\ell$ makes the sizes of verification keys compact, whereas the computational costs, which depend on $k$, increase.

We assume that the bit length of $\id$ is 64 bits and the DS signature size is 512 bits (assuming the EdDSA signatures as in the experiment section).
Therefore, the signature sizes are calculated by $\lrfloor{(|\Dv|+1/2)k\ell/\ln 2}+\lrfloor{\log_2 k}+514$ for the instantiation and $|\Dv|\cdot 64+512$ for the trivial construction.
Obviously, compared to the trivial construction, our instantiation enables a $40$\%--$65$\% size reduction of the signature size, depending on the false-positive probability $\mu$.
Although we employed the Bloom filter since we wanted to see the theoretical performance of the proposed construction, in the next section, we implement $\system$ from our generic construction instantiated with the vacuum filter~\cite{vldbWZSQ19}, which is a more recent AMQ structure yielding experimentally better performance.

\section{Experiments} \label{sec:experiments}

In this section, we describe experimental evaluations of $\system$. 
Our primary motivation for the evaluations is to confirm how communication sizes can be reduced by virtue of an AMQ structure compared with the trivial construction and broadcast authentication~\cite{ainaWYS21}, which supports the functionality that a sender chooses an arbitrary subset of receivers.\footnote{Although the broadcast authentication in~\cite{ainaWYS21} is based on message authentication codes (MAC), we simply say signatures as MAC for the sake of convenience.}

We first describe our implementation of the proposed CMDVS constructions through their instantiations and then demonstrate experimental results, including the computation time on a laptop PC. 
Finally, we discuss the feasibility of $\system$ by estimating the entire process on a Raspberry Pi over a typical network and the power consumption. 
On the system model of $\system$ described in Section~\ref{sec:System}, the laptop PC corresponds to a systems manager $\SysM$, and the Raspberry Pi corresponds to an IoT device among the designated devices $\IDdsg$. 
Since a Raspberry Pi has become popular and widely used, we believe that the estimation gives us insight into $\system$ in the real world.

\subsection{Implementation and Experimental Setting} \label{sec:expereiment_setting}

We implemented the proposed CMDVS constructions in Section~\ref{sec:Construction} in the C++ language with EdDSA~\cite{eddsa} and vacuum filters~\cite{vldbWZSQ19}. 
EdDSA is implemented in the libsodium\footnote{\url{https://libsodium.gitbook.io/doc/}} library version 1.0.18-stable and the vacuum filter is implemented in the Vacuum-Filter library.\footnote{\url{https://github.com/wuwuz/Vacuum-Filter}}


We first measure the communication size when the proposed CMDVS constructions, i.e., the trivial and generic constructions, are implemented on a laptop PC. 
Our code returns a bit length per designated device via the vacuum filter library and then we count up the total size for communication with the bit length.
We also implemented the broadcast authentication~\cite{ainaWYS21} with the OpenSSL library version 1.1.1. 
The environment of the laptop PC is Ubuntu 18.04.5 LTS on the Windows Subsystem for Linux over Windows 11 and is with Intel Core i7-8565U and 16 gigabytes memory. 
The entire performance is then estimated over LoRa with its maximum transmission speed of 250 kilo-bits per second as a typical wireless network setting. 
We assume that a device identifier is 64 bits and the bit length of commands sent to designated devices is 256 bits, respectively.

\subsection{Results} \label{sec:exp_results}

\begin{figure*}[ttt!]
 \begin{minipage}[t]{0.49\hsize}
    \centering
    \hspace*{-2em}
    \includegraphics[scale=0.315]{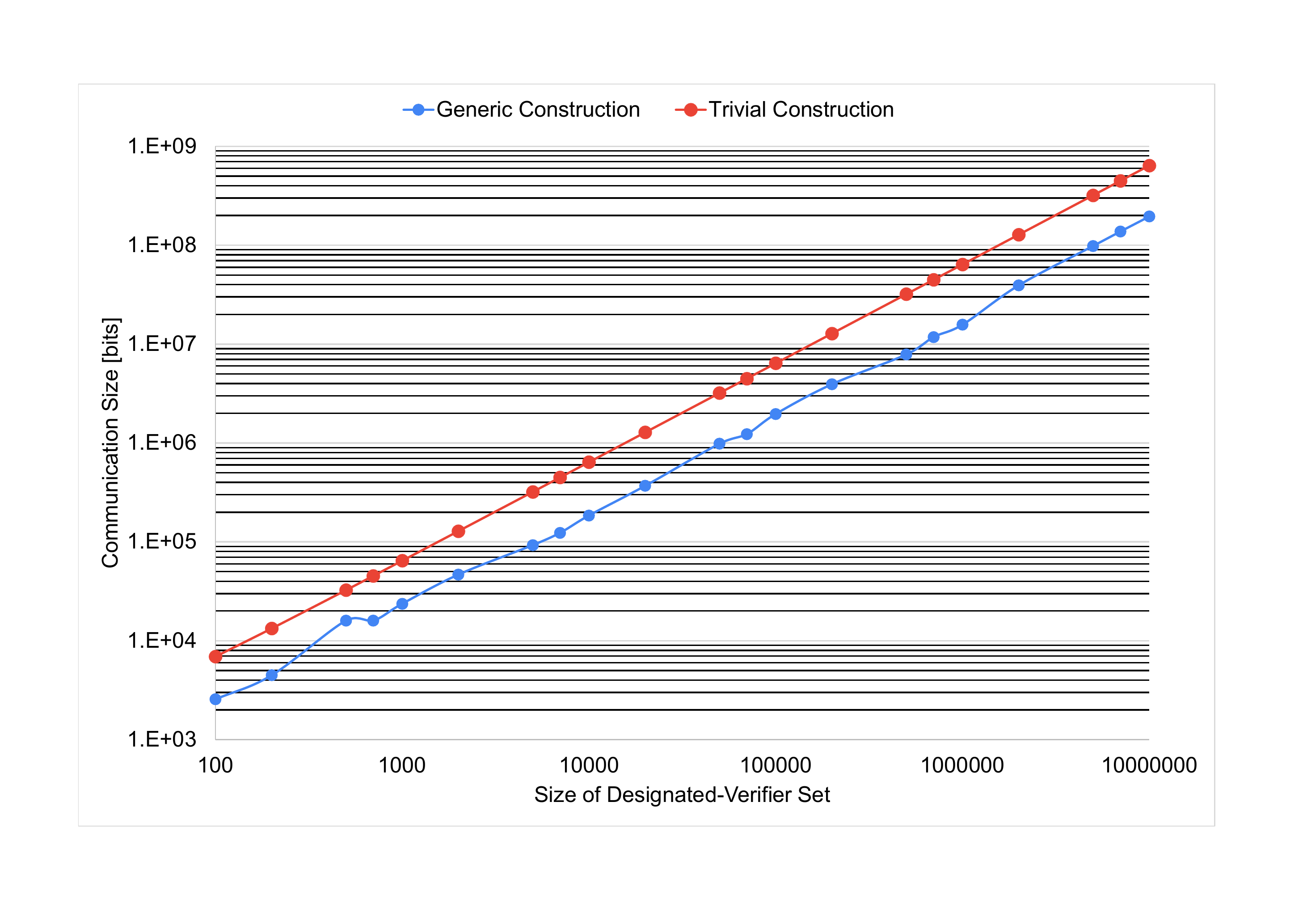} 
    \vspace*{-3.5em}
    \caption{Communication size versus the size of designated-verifier set: 
    The red line, denoted by Generic Construction, represents the proposed generic construction in Section~\ref{sec:Construction_Generic} while the blue line, denoted by Trivial Construction, represents the trivial construction in Section~\ref{sec:Construction_Trivial}, respectively. 
    }
    \label{fig:eval_datasize}
 \end{minipage}
 \hfill
 \begin{minipage}[t]{0.49\hsize}
    \centering
    \hspace*{-2em}
    \includegraphics[scale=0.315]{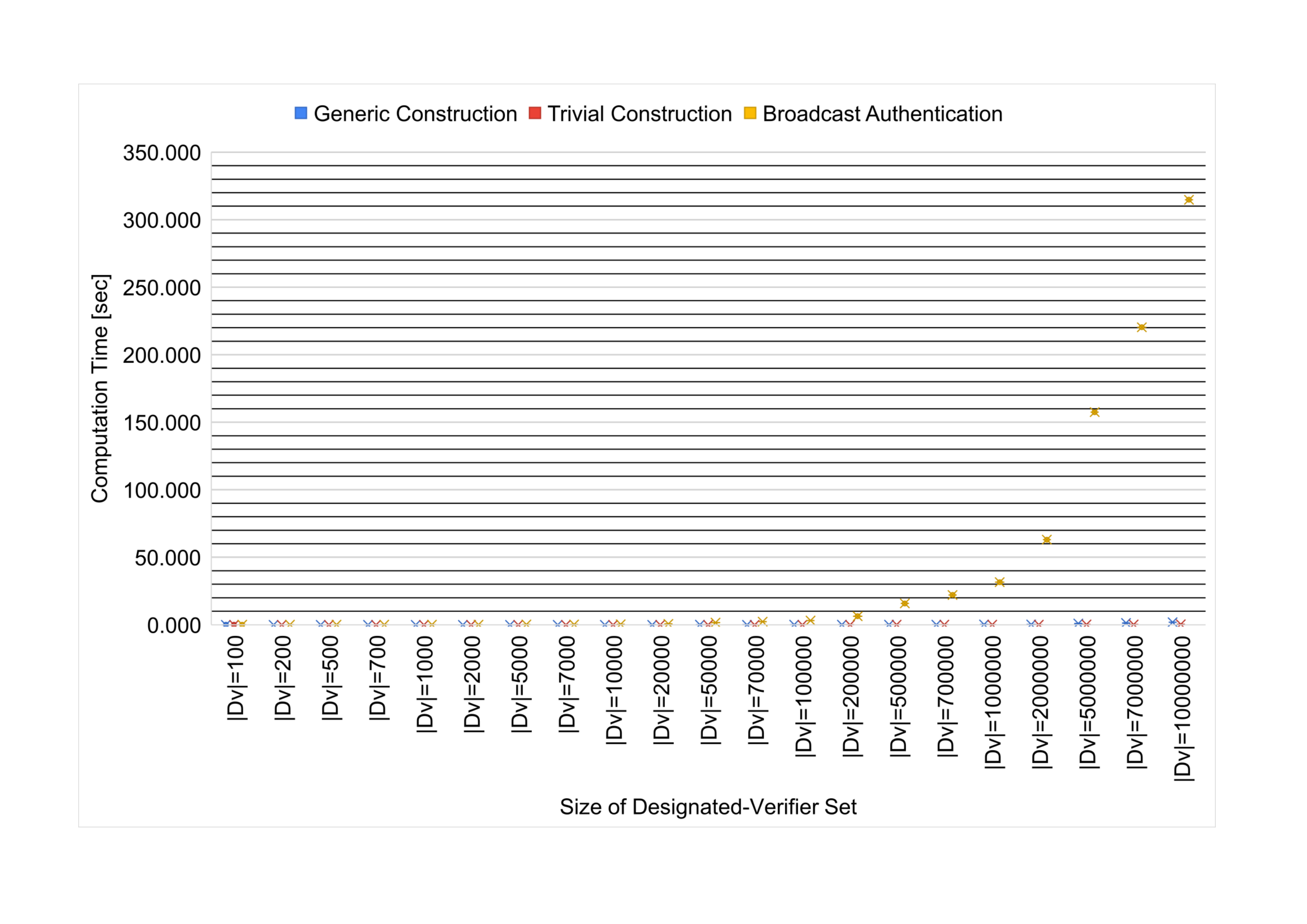}
    \vspace*{-3.5em}
        \caption{Computation time versus size of designated-verifier set for $\Sign$: This figure is a box-and-whisker plot. 
    Other setting is common with Figure~\ref{fig:eval_datasize}.
    The yellow line, denoted by Broadcast Authentication, represents the scheme in~\cite{ainaWYS21}.}
        \label{fig:eval_comp_sign_appen}
 \end{minipage}
\end{figure*}

\para{Communication Size} \label{sec:resuts_communication}
The results of the communication size are shown in Figure~\ref{fig:eval_datasize}. 
According to the figure, the communication size for the generic construction becomes four times smaller than the trivial construction and 25 times smaller than the broadcast authentication, respectively. 
Such advantage of the communication size is obtained by an AMQ structure, i.e., the vacuum filter. 
The false-positive probability of the vacuum filter is about 0.01\% in this measurement. 

The bit length per designated device for the generic construction is about 20 bits and is 
almost stable for any number of the designated devices. 
It means that the communication size could be compressed by about 30\% because the bit length per designated device for the trivial construction is 64 bits as described in Section~\ref{sec:expereiment_setting}. 
Notably, the communication size could be compressed by about 4\% compared to the broadcast authentication.

\mpara{Computation Time} \label{sec:results_computation}
We also measure the computation time for the $\Sign$ and $\Vrfy$ algorithms as shown in Figures~\ref{fig:eval_comp_sign_appen}--\ref{fig:eval_comp_vrfy}. 
For the $\Sign$ algorithm, the generic construction and the trivial construction are two orders of magnitude faster than the broadcast authentication. (See in Figure~\ref{fig:eval_comp_sign_appen}). 
Indeed, the generic construction and the trivial construction generate only a single signature, while the broadcast authentication needs to generate individual signatures in proportion to the number of devices. 
Consequently, the computation time could be drastically improved compared to broadcast authentication. 

We also compare the generic construction with the trivial construction in detail, and their results are shown in Figure~\ref{fig:eval_comp_sign} and Figure~\ref{fig:eval_comp_vrfy}, respectively. 
According to the figures, the computation times for the $\Sign$ and $\Vrfy$ algorithms of the generic construction are almost identical to those for the trivial construction until 200,000 devices. 
Meanwhile, the computation time for both $\Sign$ and $\Vrfy$ algorithms of the generic construction is greater than the trivial construction. 

The reason is that the $\Insert$ and $\Lookup$ process of the AMQ structure takes a long time in proportion to the size of a designated-verifier set $\Dv$. 
In contrast, the trivial construction needs only string operations for each algorithm, i.e., concatenation of $\Dv$ for $\Sign$ and search of $\id$ in $\Dv$ for $\Vrfy$. 
We note that the computation time for the generic construction should be longer than that for the trivial construction, because the generic construction executes the $\Insert$ and $\Lookup$ processes as well as the generation of the EdDSA signatures, whereas the trivial construction generates only the EdDSA signatures. 
The above phenomenon is common with broadcast authentication since it computes a single verification computation in the $\Vrfy$ algorithm. 

It also indicates that the overheads caused by the AMQ structure can be represented in the differences between the generic construction and the trivial construction in Figure~\ref{fig:eval_comp_sign} and Figure~\ref{fig:eval_comp_vrfy}. 
Specifically, the computation time for the $\Sign$ algorithm of the generic construction becomes about five times longer by using the AMQ structure than that of the trivial construction after 500,000 devices. 
We also note that the computation time for the $\Vrfy$ algorithm of the generic construction becomes a hundred times longer due to the use of the AMQ structure.

\begin{figure*}[ttt!]
 \begin{minipage}[t]{0.49\hsize}
    \centering
    \hspace*{-2em}
    \includegraphics[scale=0.315]{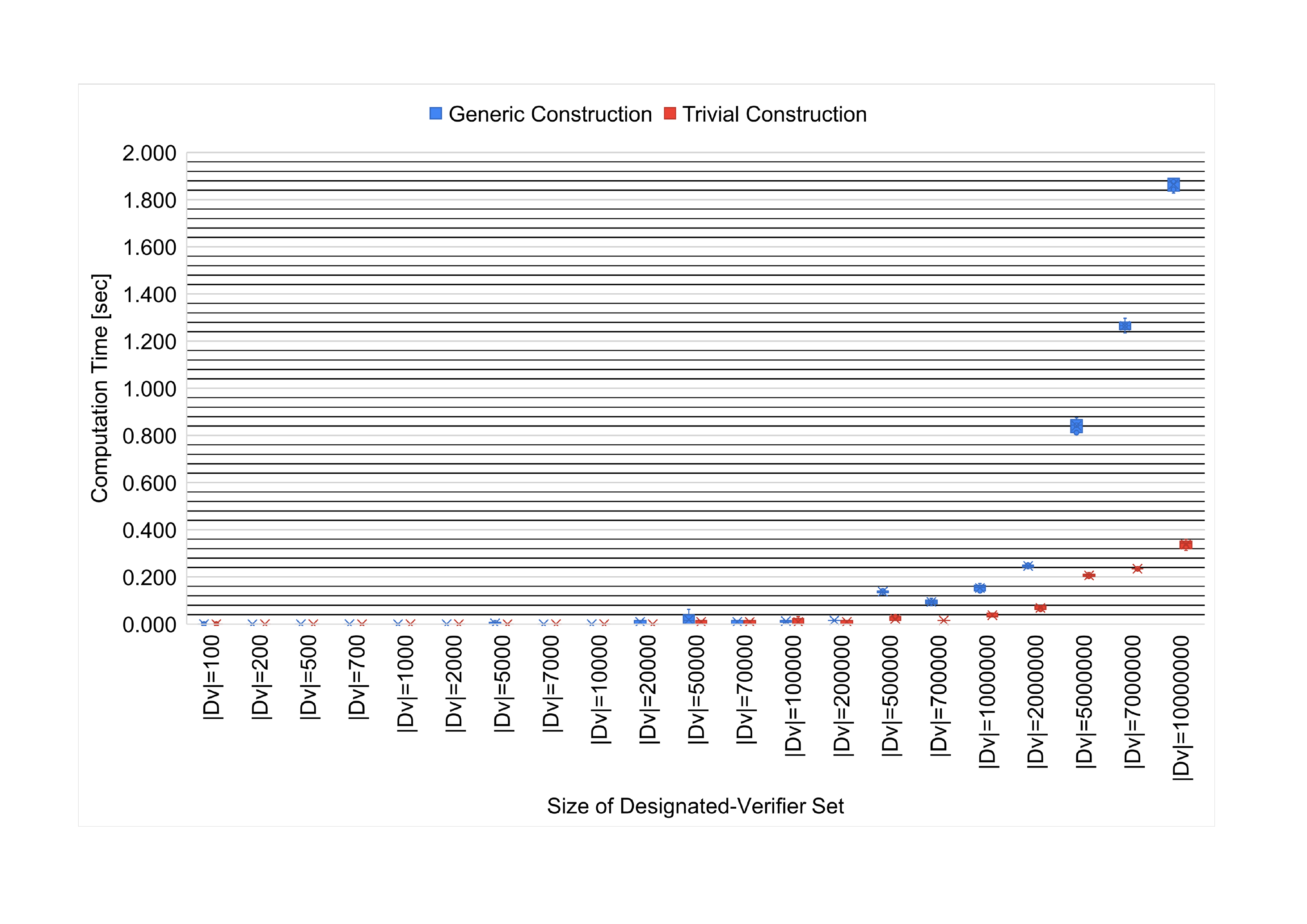}
    \vspace*{-3.5em}
    \caption{Computation time versus size of designated-verifier set for $\Sign$: This figure is the detail version of Figure~\ref{fig:eval_comp_sign_appen} excluding Broadcast Authentication.}
    \label{fig:eval_comp_sign}
 \end{minipage}
 \hfill
 \begin{minipage}[t]{0.49\hsize}
 \centering
    \hspace*{-2em}
    \includegraphics[scale=0.315]{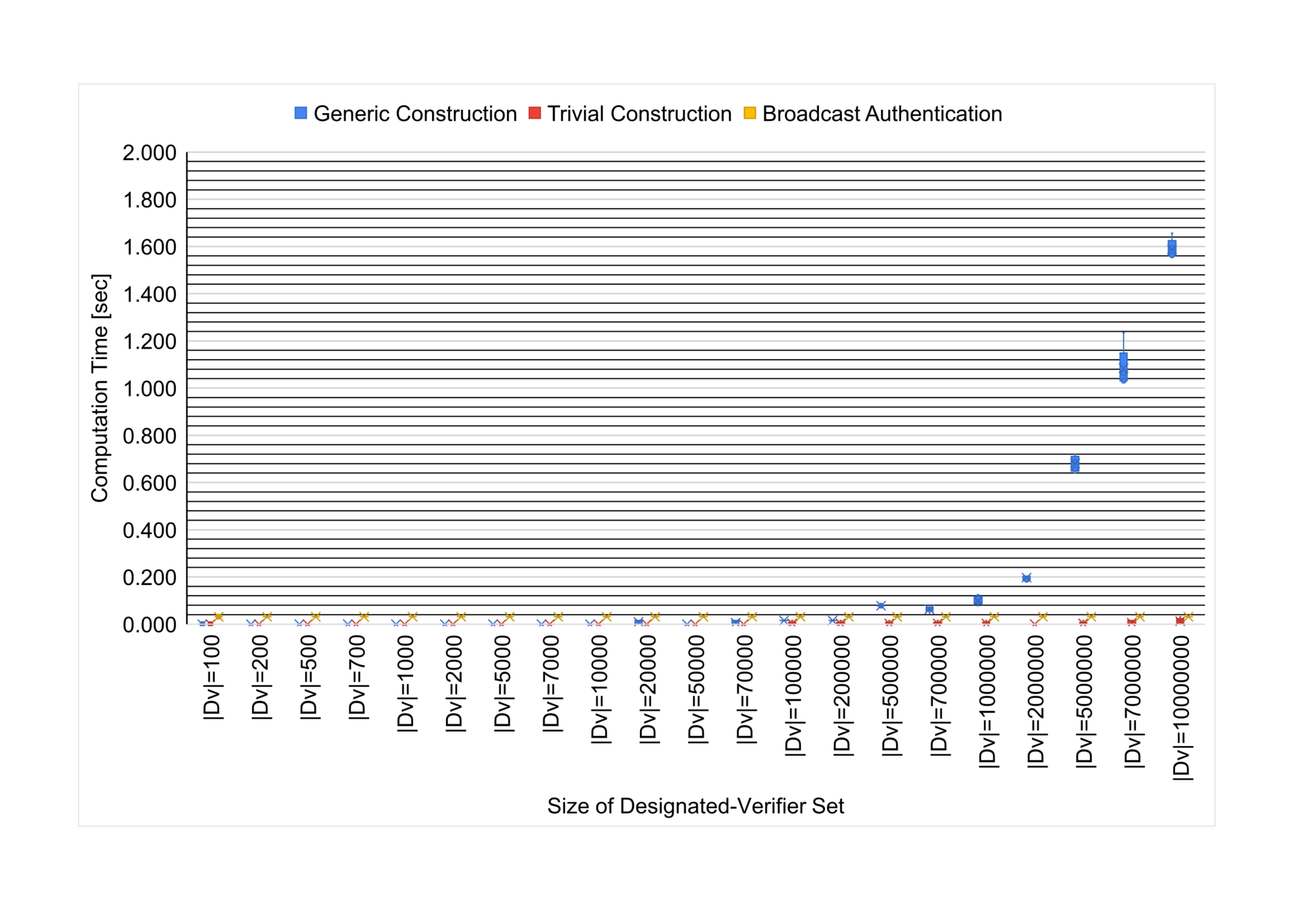}
    \vspace*{-3.5em}
    \caption{Computation time versus size of designated-verifier set for $\Vrfy$: The setting is common with Figure~\ref{fig:eval_comp_sign_appen}.}
    \label{fig:eval_comp_vrfy}
 \end{minipage}
\end{figure*}

\mpara{Entire Performance} \label{sec:entire_performance}
Based on the results in the previous subsections, the entire performance of $\system$ over the LoRa network is estimated as shown in Figure~\ref{fig:eval_final}. 
This figure shows the entire performance of $\system$ over the LoRa network, including the computation for the $\Sign$ and $\Vrfy$ algorithms, wherein a systems manager $\SysM$ generates an authenticated command $\acmd$ and each device $\id$ receives $\acmd$. 
Here, the entire performance is then estimated over LoRa with its maximum transmission speed of 250 kilo-bits per second as described above. 

According to the figure, the performance of $\system$ based on the generic construction can be three times faster than that based on the trivial construction. 
Interestingly, compared to the broadcast authentication, it is 25 times faster than the broadcast authentication, and therefore two orders of magnitude faster. 
In particular, the elapsed time per device is about 0.08 milliseconds for the generic construction, about 0.26 milliseconds for the trivial construction, and about 2 milliseconds for the broadcast authentication, respectively. 
The performance improvement is obtained by virtue of compressing the communication size via the AMQ structures. 

Since the performance improvement by the proposed construction is stable for any number of devices in $\Dv$, we can also estimate the number of IoT devices that can be controlled within a second. 
Notably, devices of more than 12,000 can be controlled by $\system$ based on the proposed construction over the LoRa network, which is greater than 4,000 devices by the trivial construction and 400 devices by the broadcast authentication.

\begin{figure*}[ttt!]
    \begin{minipage}[t]{.5\textwidth}
    \centering
    \hspace*{-2em}
    \includegraphics[scale=0.315]{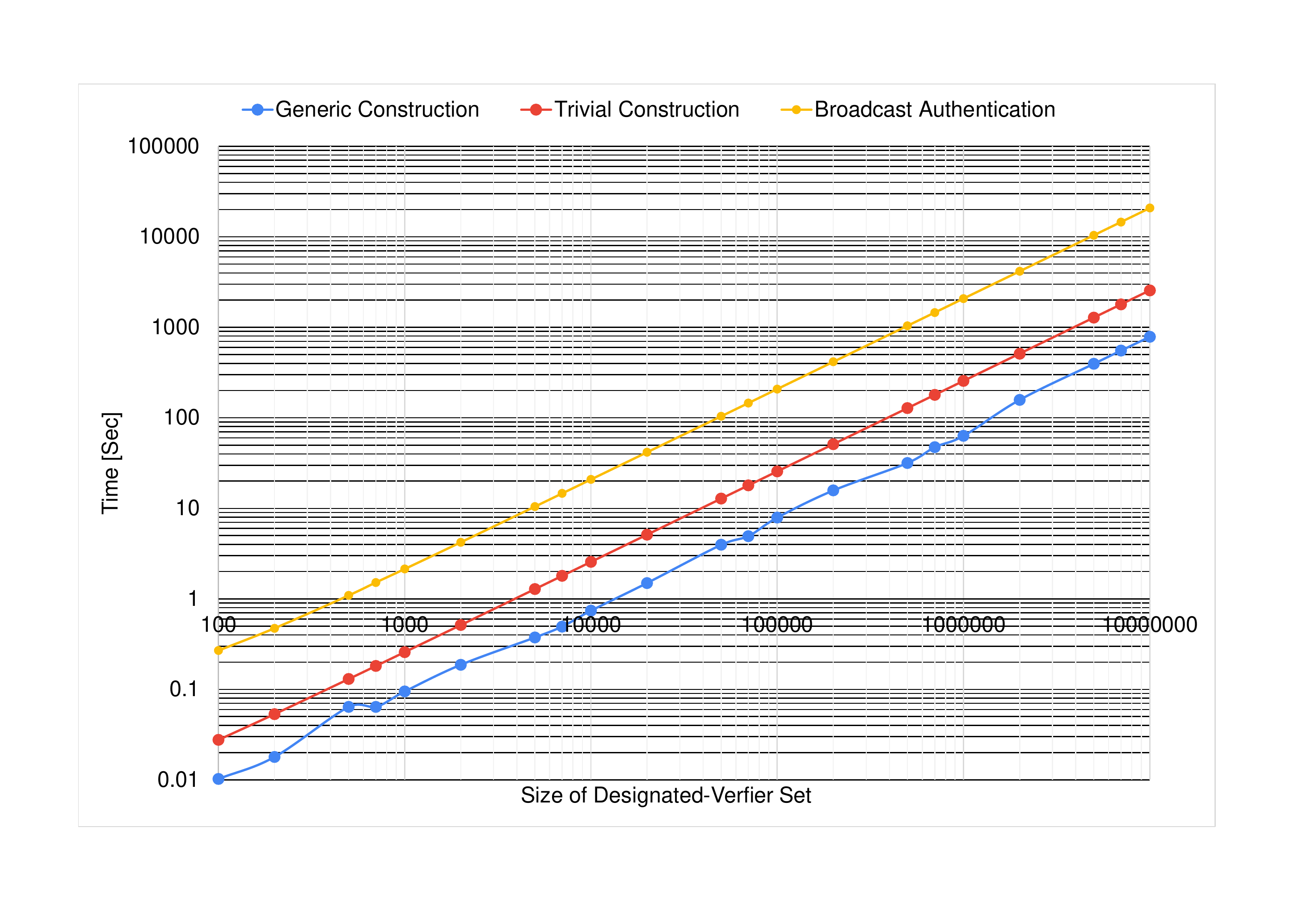}
    \vspace*{-3.5em}
    \caption{Entire performance versus the size of the designated-verifier set. 
    The setting is common with Figure~\ref{fig:eval_comp_sign}. 
    This figure includes both the communication time and the computation time. 
    }
    \label{fig:eval_final}
    \end{minipage}
    \hfill
    \def\@captype{table}

\end{figure*}

\subsection{Feasibility on IoT Devices} \label{sec:feasibility}

We discuss the feasibility of $\system$ for IoT devices in the real world. 
In particular, 
we deploy the $\Vrfy$ algorithms in a Raspberry Pi as an IoT device and then estimate the performance in 
the same setting as Section~\ref{sec:expereiment_setting}. 
We also evaluate the power consumption for battery life of IoT devices. 
The environment is with a Raspberry Pi3 with Ubuntu Server 20.4.4 LTS for the arm64 architecture.


\mpara{Entire Performance on Low-Power Devices}
We measure the computation time for the $\Vrfy$ algorithms on the Raspberry Pi and then estimate the entire performance with IoT devices as $\Dv$. 
Although we omit the detail of measurement results due to space limitation, the computation time for the generic construction and the trivial construction is almost the same until 200,000 devices, and that for the generic construction becomes greater than the trivial construction after 500,000 devices. 
In particular, the computation time for the generic construction is forty times longer than the trivial construction. 
On the other hand, it is 1.1 times longer than broadcast authentication. 
The broadcast authentication requires the Raspberry Pi to load a huge size of signatures in its memory storage. Therefore the computation time for the generic construction is close to the broadcast authentication.

Interestingly, even with the longer computation time on the Raspberry Pi, the entire performance of $\system$ over the LoRa network is almost the same as Figure~\ref{fig:eval_final}. 
The reason is that the bottleneck of $\system$ is the communication overhead as long as a low-power wide area network is utilized. 
For instance, the elapsed time for the entire process with the generic construction over the LoRa network is about 789 seconds. 
It is divided into 781 seconds for communication and 8 seconds for computation of the $\Sign$ and $\Vrfy$ algorithms. 
Similarly, the elapsed time for the entire process with the trivial construction is about 2560.4 seconds, which is divided into 2560 seconds for communication and 0.4 seconds for computation. 
The elapsed time for the entire process with the broadcast authentication is about 20800 seconds, which is divided into 20,480 seconds for communication and 320 seconds for computation.

The above fact gives us two important insights. 
First, AMQ structures are attractive because decreasing the communication size can significantly improve the entire performance, even on IoT devices. 
An IoT device can be controlled with about 0.08 milliseconds per device under $\system$ based on the generic construction. 
Second, $\system$ based on the generic construction can control devices of more than 12,000 over the LoRa network within a second. 
It is more significant than 4,000 devices by the trivial construction and 130 devices by the broadcast authentication, and we thus conclude that $\system$ based on the generic construction is practical.

\mpara{Communication Overheads on Low-Power Wide Area Networks}
We discuss $\system$ over low-power wide area networks other than LoRa as further applications. 
We know eMTC\footnote{\url{https://halberdbastion.com/technology/iot/iot-protocols/emtc-lte-cat-m1\#:\~:text=An\%20eMTC\%20Cat\%2DM1\%20network,any\%20existing\%20LTE\%20channel\%20width.}} with its maximum transmission speed of 1 mega-bits per second and SIGFOX\footnote{\url{https://www.sigfox.com/en/what-sigfox/technology}} with its maximum transmission speed of 600 bits per second as specifications for low-power wide area networks. 

$\system$ based on the generic construction is stably three times faster than the trivial construction and 25 times faster than the broadcast authentication over these networks by virtue of compressing the communication cost. 
For instance, in the case of SIGFOX, 12,000 devices are controlled within about 308 seconds by the generic construction, within about 1200 seconds by the trivial construction, and within 10240 seconds by the broadcast authentication. 
In the case of eMTC, 12,000 devices can be controlled within about 0.24 seconds by the generic construction, 0.64 seconds by the trivial construction, and 6.59 seconds by the broadcast authentication. 

Overall, for a communication protocol with its maximum transmission speed of 50 mega-bits per second, $\system$ based on the generic construction is faster than the trivial construction. 
For a communication protocol whose maximum transmission speed is 100 greater than mega-bits per second, $\system$ based on the generic construction is still faster than the trivial construction as long as the number of IoT devices is fewer than 700,000. 
Moreover, it is also faster than broadcast authentication over 5G with a maximum transmission speed of 10 gigabits per second by virtue of the use of a single signature.

\mpara{Power Consumption}
To evaluate the impact on battery lifetime for $\system$, we measured the power consumption when the codes of the CMDVS constructions are executed on the Raspberry Pi. In particular, the Raspberry Pi was connected to Watt Checker, TAP-TST10,\footnote{\url{https://www.sanwa.co.jp/en/power.html}} and then we measured the average current consumed for the codes of the $\Vrfy$ algorithms, that were executed on the Raspberry Pi. We also assume the use of Anker 633 Magnetic Battery\footnote{\url{https://www.anker.com/products/a1641?ref=search\_battery\#!}} with 10,000mAh. Here, the power consumption in the standby state of the Raspberry Pi is 1.7~watts, and a 5-volt power supply is used. The size of $\Dv$ is $10{,}000{,}000$.

The result is shown in Table~\ref{tab:power_consumption}. According to the table, the difference between the generic construction and the trivial construction is 0.1~W, while that between the generic construction and the broadcast authentication is 0.4~W, respectively. This difference seems stable even when we change the size of $\Dv$. Consequently, it is considered that the advantage is obtained by compression of the communication size through AMQ structures. When the battery described above is used, its battery life is a one-hour difference with the trivial construction and a three-hour difference with the broadcast authentication. We also note that most parts of the battery life are due to the standby state of the Raspberry Pi. When a lower-power device is used, the battery life will be longer. 

\begin{table}[t!]
    \centering
    \caption{Average on power consumption and battery life for $\system$: 
    Each value represents the average of the power consumption and battery life on five executions. 
    }
    \label{tab:power_consumption}
    
    \begin{tabular}{cccc} \hline 
          Construction & Watt & Ampere & Battery Life [h]  \\ \hline \hline
         Generic Construction & 2.4 & 0.48 & 21 \\ \hline
         Trivial Construction & 2.5 & 0.50 & 20  \\ \hline 
         Broadcast Authentication & 2.8 & 0.56 & 18 \\ \hline
    \end{tabular}
    \vspace*{-.5em}
\end{table}



\if0
\subsection{Threat to Validity} \label{sec:threat_to_valid}

Our experimental results contain two threats to the validity of the entire performance in the real world. The first threat is communication. 
Although we showed promising results in the previous section, they were based on an ideal environment where each device can communicate with the maximum transmission speed over each network. 
There are several factors to deduce the speed in the real world, e.g., collision and interference, the performance in the real world will be lower than our promising results. 
However, we also note that the generic construction may still have advantages because the communication size is smaller than the trivial construction and the broadcast authentication. 

Another threat is the choice of IoT devices. 
We used a Raspberry Pi as a typical device, whose computational power is higher than sensor devices.\footnote{\url{https://www.tockos.org/blog/2017/introducing-hail/}} 
If sensor devices with more limited computational resources are used, the computation time may become dominant due to the AMQ structures. 
Further studies, which take the above threats into account, will need to be undertaken. 
\fi

\section{Related Work} 
\label{sec:RelatedWork}

\para{Cryptographic Protocols Based on AMQ Structures}
Most of the cryptographic research related to AMQ structures (e.g., \cite{ecDJSS18,cryptoNY15,ndssSSL+21}) focus on the Bloom filter~\cite{comacmBlo70} since, unlike recent experimentally-efficient AMQ structures, it has been well analyzed in a theoretical sense.
The previous works have completely different goals from ours:
Derler et al.~\cite{ecDJSS18} introduced Bloom filter encryption to efficiently realize puncturable encryption, which is a special type of public-key encryption;
Naor and Yogev~\cite{cryptoNY15} considered the Bloom filter in adversarial environments to make the Bloom filter robust;
and Sun et al.~\cite{ndssSSL+21} employed the Bloom filter to maintain an encrypted database for searchable encryption.
To the best of our knowledge, there is no research on cryptographic protocols based on AMQ structures in the context of secure remote control.

\mpara{Message Authentication Protocols for Many Users}
MDVS~\cite{icicsLV04,tccDHM+20} is digital signatures in the multi-user setting.
Each user has signing and verification keys, and any user can designate an arbitrary subset of other users and generate a signature so that only the designated users can check the validity of the signature.
The recent MDVS construction~\cite{tccDHM+20} with strong security notions require heavy cryptographic primitives such as bounded-collusion functional encryption~\cite{cryptoGVW12}.
On the other hand, CMDVS is a restricted version of MDVS and our CMDVS construction only requires AMQ structures and standard digital signatures, which are lightweight enough for IoT environments.
For the efficiency reason, we only employed our CMDVS scheme for the experimental evaluations.

Broadcast authentication~\cite{spCP10,ccsPer01,spPCTS00,tifsShi17,tcsSW01,baee2022ali} aims to broadcast a single piece of data to many receivers with data authenticity.
However, except for Watanabe et al.'s work~\cite{ainaWYS21}, the existing works do not support the functionality that a sender chooses an arbitrary subset of receivers;
data is always broadcast to all receivers.
Watanabe et al.~\cite{ainaWYS21} introduced anonymous broadcast authentication (ABA), which supports such functionality and provable anonymity.
Although ABA and CMDVS have similar functionality, they have a clear difference between them:
due to the provable anonymity, the lower bound on the authenticator sizes of ABA is $\Omega(d\cdot\secpar)$, where $d$ is the number of designated receivers and $\secpar$ is the security parameter.
Our CMDVS construction overcame the lower bound.
Note that CMDVS can be used in combination with existing (unproved) anonymizing techniques~\cite{li2015lightweightanonymous}. 

\mpara{IoT Security}
IoT security can be realized from the firmware level~\cite{costin2016automated,costin2014large-scale} to the application~\cite{fernandes2016security,ronen2017iot}. 
Although the conventional approach focuses on controlling the data flow~\cite{FlowFence,celik2019iotguard,fan2021ruledger}, cartographic approach is discussed in recent years~\cite{usenixAKAF+19,antonio2016aot,usenixKHAP+19,rana2022lightweightcryptography,seyhan2021bi-gisis}. 
To the best our knowledge, the IoT security in recent years is based on two ways~\cite{schiller2022landscape}, machine learning~\cite{nguyen2019diot,jia2017contexlot,raza2013svelte,lei2020secwir,mothukuri2022federated-learning} or trusted execution environments~\cite{xu2019dominance,suzaki2020reboot,valadares2021systematicliterature}. 
These approaches often utilize a central server to control resource-constrained IoT devices outside of them. 
In contrast, our approach is built-in for IoT devices because the $\Vrfy$ algorithm is embedded in them. 
 


\section{Concluding Remarks} \label{sec:Conclusion}
In this paper, we proposed $\system$, a secure system aiming to control IoT devices remotely.
$\system$ enables us to not only bring infected IoT devices to a halt but also have any subset of all IoT devices execute arbitrary commands.
To this end, we introduced a novel cryptographic primitive for $\system$, called centralized multi-designated verifier signatures (CMDVS).
We also provided an efficient CMDVS construction, which yields compact communication sizes and fast verification procedures for $\system$.
We further discuss the feasibility of $\system$ by implementing the CMDVS construction with vacuum filters and its experimental evaluation with a Raspberry Pi. 
We have released our source to provide reproducibility and expect further subsequent work. 
According to the evaluation results, the CMDVS construction can compress communication size to about 30\% for the trivial construction and 4\% for the broadcast authentication; hence, it is expected to $\system$ based on the CMDVS construction is three times faster than the trivial construction and 25 times faster than the broadcast authentication over typical low-power wide area networks even with an IoT device. 
Furthermore, we discussed that $\system$ is feasible with respect to the communication overheads on low-power wide area networks and the power consumption. 
We thus conclude that $\system$ based on the CMDVS construction is practical. 
We plan to conduct experiments of $\system$ in the real world for further evaluation, including physics features. 

\subsubsection*{Acknowledgments}
This research was conducted under a contract of ``Research and development on IoT malware removal / make it non-functional technologies for effective use of the radio spectrum'' among ``Research and Development for Expansion of Radio Wave Resources (JPJ000254)'', which was supported by the Ministry of Internal Affairs and Communications, Japan.


\subsubsection*{Code Availability}
Our source code is publicly available at \url{https://github.com/naotoyanai/fiilter-signature_ABA} via GitHub. 

{\small

}

\appendix




\end{document}